\begin{document}

\newtheorem{definition}{\bf Definition}
\newtheorem{theorem}{\bf Theorem}
\newtheorem{lemma}{\bf Lemma}
\newtheorem{assumption}{\bf Assumption}
\newtheorem{remark}{\bf Remark}
\newtheorem{proposition}{\bf Proposition}
\newtheorem{corollary}{\bf Corollary}
\newtheorem{property}{\bf Property}
\newtheorem{problem}{\bf Problem}
\newcommand{\e}[1]{\ensuremath{\times 10^{#1}}}
\newcommand{\ud}{\mathrm{d}}

\renewcommand{\proofname}{\indent \textbf {Proof}}
\def\QEDclosed{\mbox{\rule[0pt]{1.3ex}{1.3ex}}}
\def\QED{\QEDclosed}
\def\proof{\indent{\emph{Proof: }}}
\def\endproof{\hspace*{\fill}~\QED\par\endtrivlist\unskip}


\title{Realtime Profiling of Fine-Grained Air Quality Index Distribution using UAV Sensing}

\vspace{0.2cm}
\author{\IEEEauthorblockN{{Yuzhe Yang},
        {Zijie Zheng},~\IEEEmembership{Student Member,~IEEE,}
        {Kaigui Bian},~\IEEEmembership{Member,~IEEE,}\\
        {Lingyang Song},~\IEEEmembership{Senior Member,~IEEE},
        and {Zhu Han},~\IEEEmembership{Fellow,~IEEE}}\\
\vspace{-0.3cm}
\thanks{Y. Yang, Z. Zheng, K. Bian and L. Song are with School of Electrical Engineering and Computer Science, Peking University, Beijing, China (email: \{yuzhe.yang, zijie.zheng, bkg, lingyang.song\}@pku.edu.cn).}
\thanks{Z. Han is with Electrical and Computer Engineering Department, University of Houston, Houston, TX, USA (email: zhan2@uh.edu).}
}

\maketitle

\begin{abstract}
Given significant air pollution problems, air quality index~(AQI) monitoring has recently received increasing attention.
In this paper, we design a mobile AQI monitoring system boarded on unmanned-aerial-vehicles~(UAVs), called \emph{ARMS}, to efficiently build fine-grained AQI maps in realtime.
Specifically, we first propose the Gaussian plume model on basis of the neural network~(GPM-NN), to physically characterize the particle dispersion in the air.
Based on GPM-NN, we propose a battery efficient and adaptive monitoring algorithm to monitor AQI at the selected locations and construct an accurate AQI map with the sensed data.
The proposed adaptive monitoring algorithm is evaluated in two typical scenarios, a two-dimensional open space like a roadside park, and a three-dimensional space like a courtyard inside a building.
Experimental results demonstrate that our system can provide higher prediction accuracy of AQI with GPM-NN than other existing models, while greatly reducing the power consumption with the adaptive monitoring algorithm.
\end{abstract}

\begin{IEEEkeywords}
Mobile sensing, air quality, fine-grained monitoring, unmanned aerial vehicle~(UAV).
\end{IEEEkeywords}

\IEEEpeerreviewmaketitle

\section{Introduction}
\IEEEPARstart{I}{n} a recent report from the World Health Organization~(WHO)~\cite{who}, air pollution has become the world's largest environmental health risk, as one in eight of global deaths are caused by air pollution exposure each year.
Air pollution is caused by gaseous pollutants that are harmful to humans and ecosystem, especially concentrated in the urban areas of developing countries.
Thus, reducing air pollution would save millions of lives, and many countries have invested significant efforts on monitoring and reducing the emission of air pollutants.
Government agencies have defined air quality index~(AQI) to quantify the degree of air pollution. AQI is calculated based on the concentration of a number of air pollutants~(e.g., the concentration of ${\rm PM}_{2.5}$, ${\rm PM}_{10}$ particles and so on in developing countries).
A higher value of AQI indicates that air quality is ``heavily'' or ``seriously'' polluted, resulting in a greater proportion of the population may experience harmful health effects~\cite{air pollution mortality}.
To intuitively reflect AQI value of locations in either two-dimensional~(2D) or three-dimensional~(3D) area, AQI map is defined to offer such convenience~\cite{airmap}.

\subsection{Mobile AQI Monitoring}
AQI monitoring can be completed by sensors at governmental static observation stations, generating a AQI map in a local area~(e.g., a city~\cite{station}). However, these static sensors can only obtain a limited number of measurement samples in the observation area and may often induce high costs. For example, there are only 28 monitoring stations in Beijing. The distance between two nearby stations is typically several ten-thousand meters, and the AQI is monitored every 2 hours~\cite{monitor station}.
To provide more flexible and accurate monitoring as well as reduce the cost, mobile devices, such as cell phones, cars and balloons are used to carry sensors and process real time measuring.
Crowd-sourced photos contributed by mass of cell phones can help depict the 2D AQI map in a large geographical region in Beijing~\cite{mobisys}, with a range of $4$km$\times$$4$km. Mobile nodes equipped with sensors can provide $100$m$\times$$100$m 2D on-ground concentration maps with relatively high resolution~\cite{node1,node2,node3}. Sensors carried by tethered balloons can build the height profile of AQI at a fixed observation height within $1000$m~\cite{balloon}. A mobile system with sensors equipped in cars and drones can help monitor ${\rm PM}_{2.5}$ in open 3D space~\cite{blueaer}, with 200m per measurement.

\begin{figure*}[!t]
\small
\centering
\includegraphics[width=0.85\textwidth]{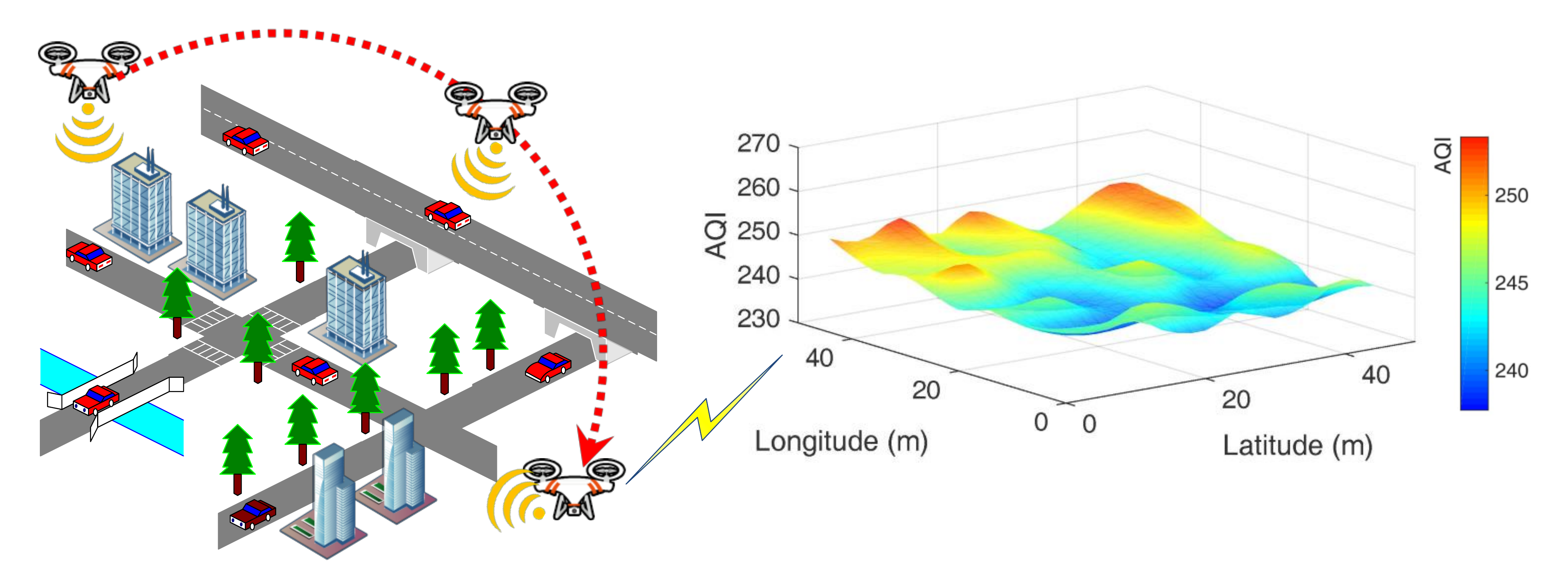}
\caption{An illustration of AQI measurement using mobile sensing over UAV.}
\vspace{-0.15cm}
\label{intro figure}
\end{figure*}

\subsection{Motivations for Realtime Fine-Grained Monitoring}
Even though current mobile sensing approaches can provide relatively accurate and real-time AQI monitoring data, they are spatially coarse-grained, since two measurements are separated by few hundreds of meters in horizontal or vertical directions in the 3D space.
However, AQI has intrinsic changes from meters to meters, and it is preferred to perform AQI monitoring in the 3D space surrounding an office building or throughout a university campus, rather than city-wide~\cite{vertical dis,height pro}. The AQI distribution in meter-sliced areas, called as fine-grained areas would be desirable for people, particularly those living in urban areas.
The fine-grained AQI map can help design the ventilation system for buildings, which for example guide teachers and students to stay away from the pollution sources on campus~\cite{whyfineaqi}.

Due to the high power consumption of mobile devices, one can only measure a limited number of locations of the entire space. 
To avoid an exhaustive measurement, using an estimation model to approximate the value of unmeasured area has been wildly adopted.
In~\cite{uair}, the prediction model is based on a few public air quality stations and meteorological data, taxi trajectories, road networks, and Point of Interests~(POIs). However, because they estimate AQI using a feature set based on historical data, their model cannot respond in realtime to the change in pollution concentration at an hourly granularity, leading to large errors at times.
In~\cite{blueaer}, the random walk model is used for prediction by dividing the whole space into different shapes of cubes. However, the model may not reflect physical dispersion of particles~\cite{model2,model3}, and all locations are measured without considering the battery life constraint when mobile devices are used.
Mobile sensor nodes used in~\cite{node1} employ the regression model as well as graph theory to estimate the AQI value at unmeasured locations. However, they mainly focus on 2D area, and can hardly produce a 3D fine-grained map.
Neural networks~(NN) are also used for forecasting on the AQI distribution~\cite{NN1,NN2,NN3,NN4}. However, its performance in fine-gained area is not satisfied without considering the physical characteristic of real AQI distribution.

\subsection{Contributions}
To this end, in this paper we design a mobile sensing system based on unmanned-aerial-vehicles~(UAVs), called \emph{ARMS}, that can effectively catch AQI variance at meter-level and profile the corresponding fine-grained distribution.
ARMS is a realtime monitoring system that can generate current AQI map within a few minutes, compared to the previous methods with an interval of a few hours.
With ARMS, the fine-grained AQI map construction can be decomposed into two parts. First, we propose a novel AQI distribution model, named Gaussian Plume model embedding Neural Networks~(GPM-NN), that combines physical dispersion and non-linear NN structure, to do predictions of unmeasured area.
Second, we detail the adaptive monitoring algorithm as well as addressing its applications in a few typical scenarios.
By measuring only selected locations in different scenarios, GPM-NN is used to estimate AQI value at unmeasured locations and generate realtime AQI maps, which can save the battery life of mobile devices while maintaining high accuracy in AQI estimation.

The contributions of our work are summarized as follows:
\begin{itemize}
\item[$\bullet$] The GPM-NN is highly adaptive in different fine-grained measurement scenarios, and it can provide higher accuracy in creating AQI maps than other existing models.
\item[$\bullet$] The adaptive monitoring algorithm can guide UAV to choose optimized trajectory in different scenarios based on GPM-NN. It can greatly reduce the battery consumption of ARMS, while achieving high accuracy when constructing realtime AQI maps.
\item[$\bullet$] The ARMS is the first UAV sensing system for fine-grained AQI monitoring.
\end{itemize}

The rest of this paper is organized as follows. In Section II, we briefly introduce our UAV sensing system. In Section III, we present our fine-grained AQI distribution model. The adaptive monitoring algorithm is addressed in Section IV. In Section V and Section VI, we present two typical application scenarios and performance analysis of ARMS, respectively. Finally, conclusions are drawn in Section VII.

\begin{figure*}[!t]
\small
\centering
\includegraphics[width=0.65\textwidth]{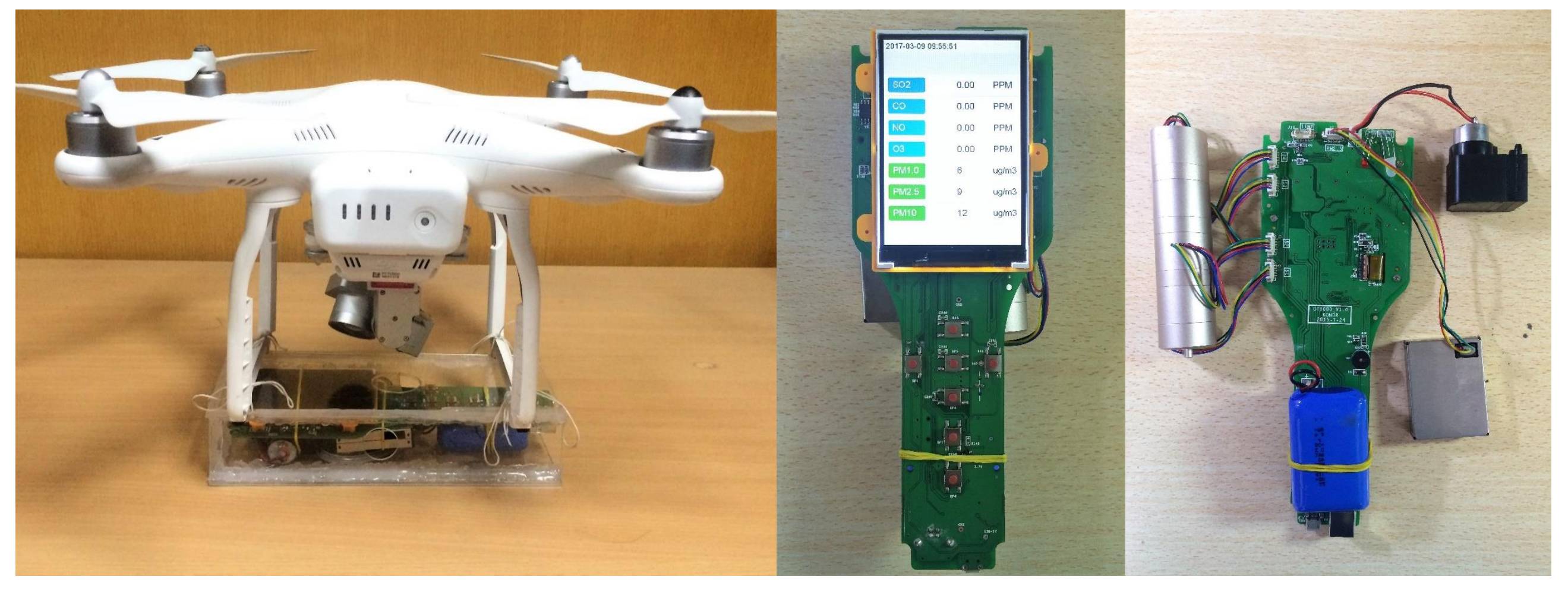}
\caption{The ARMS system, and the front and the back of the sensor board.}
\label{uav+sensor}
\end{figure*}

\vspace{0.3cm}
\section{Preliminaries of UAV Sensing System}
In this section, first we provide a brief introduction of ARMS, and then we show how to construct a dataset using ARMS.
To confirm the reliability of the collected dataset, we compare the collected data and the official AQI measured by the nearest Beijing government's monitoring station, i.e., the Haidian station~\cite{chinaweather}.
To determine the parameters of our model, we test possible factors that may influence AQI, such as wind, locations, etc., and remove those factors that have small correlations with AQI in the fine-grained scenarios from our model.

\subsection{System Overview}
The architecture of ARMS includes an UAV and an air quality sensor boarded on the UAV, as shown in Fig.~\ref{uav+sensor}. The sensor is fixed in a plastic box with vent holes, bundled on the bottom of UAV.
The sensor uses a laser-based AQI detector~\cite{sensor}, which can provide the concentration within $\leq \pm 3\%$ monitor error for common pollutants in AQI calculation, such as ${\rm PM}_{2.5}$, ${\rm PM}_{10}$, CO, NO, ${\rm SO}_2$ and ${\rm O}_3$. The values of these pollutants are realtime recorded, with which we calculate the corresponding AQI value at measuring locations.

For the UAV, we select DJI Phantom 3 Quadcopter~\cite{uav} as the mobile sensing device. The UAV can keep hosting for at most 15 minutes due to the battery constraint, which restricts the longest continuous duration within one measurement. The GPS sensor on the UAV can provide the real-time 3D position.
During one measurement, the UAV is programmed with a trajectory, including all locations that need to be measured. Following this trajectory, UAV hovers for 10 seconds to collect sufficient data to derive the AQI value at each stop, before moving to the next one.

During one monitoring process, ARMS measures all target locations and records the corresponding AQI values. After the measuring process is completed, the data is then sent to the offline PC and put into the GPM-NN model to construct the realtime AQI map. Thus, the map construction process is offline.

\subsection{Dataset Description}
Data collected by ARMS are then arranged as a dataset~\footnote{Dataset can be found at \url{https://github.com/YyzHarry/AQI_Dataset}.}.
As shown in Fig.~\ref{intro figure}, we have conducted a measurement study in both typical 2D and 3D scenarios~(i.e., a roadside park and the courtyard of an office building in Peking University), respectively, from Feb. 11 to Jul. 1, 2017, for more than 100 days to collect sufficient data.

In the dataset, each $.txt$ file includes one complete measurement over a day in one typical scenario.
In each $.txt$ file, each sample has four parameters, 3D coordinates~($x,y,z$) and an AQI value. Each value represents the measured AQI, while its coordinates in the matrix reflect the position in different scenarios. In the 2D scenario, we assume $z=0$, while measuring at an interval of 5m in $x$ and $y$ directions. In the 3D scenario, every row presents fixed position in $xy$ plane, while every column represents the height at an interval of 5m in $z$ direction.

\begin{figure}[!t]
\small
\centering
\includegraphics[width=3.3in]{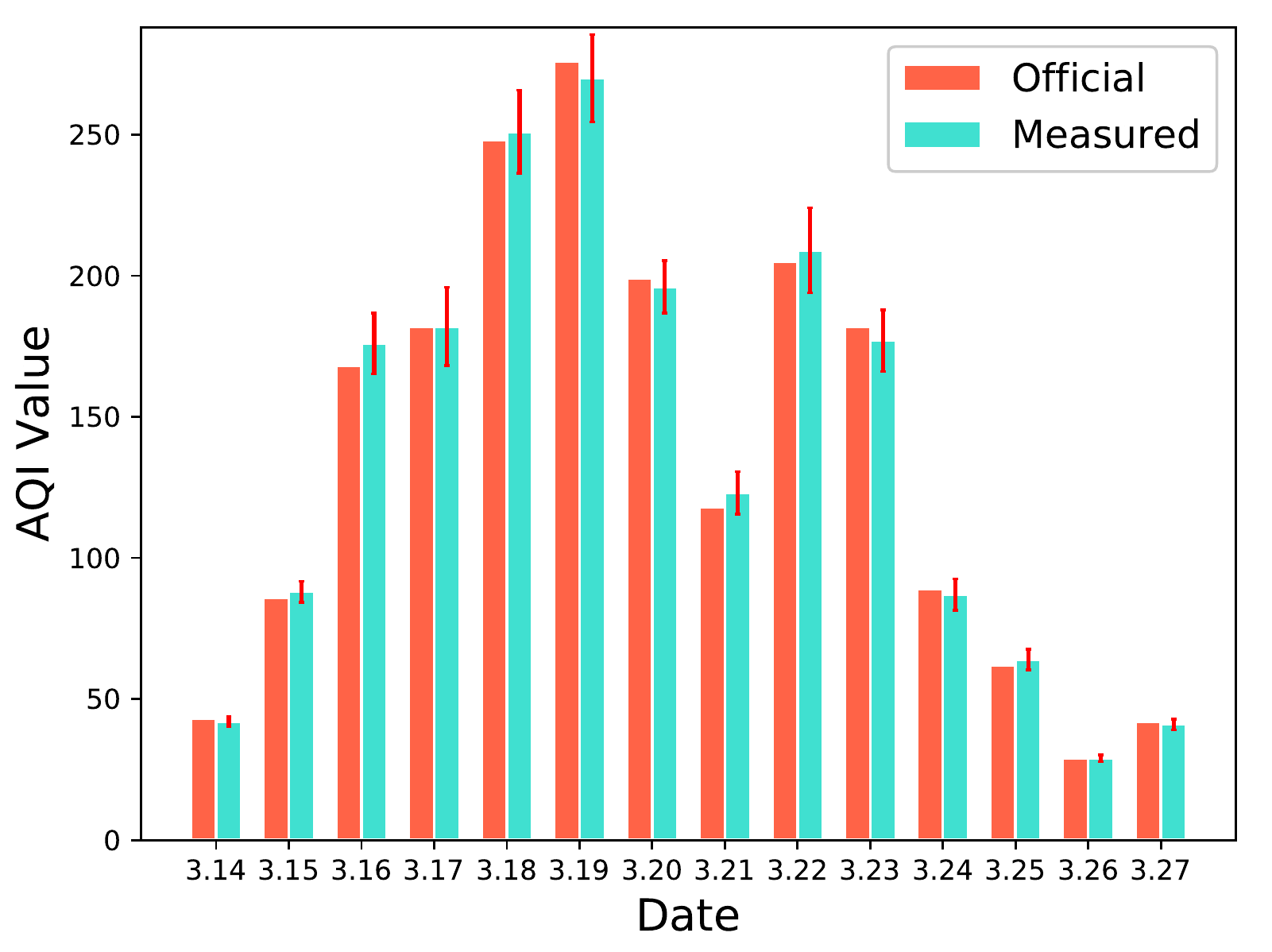}
\caption{AQI value comparison between official data and data we collected, for 14 days in March, 2017.}
\vspace{-0.15cm}
\label{aqi compare}
\end{figure}

\subsection{Data Reliability}
To verify that there is no measurement error, we show the results of the relationship between our collected data and the official data~(i.e., Haidian station~\cite{chinaweather}), in Fig.~\ref{aqi compare}. Note that the official data is limited and only for the 2D space, while our system is mobile and suitable for the 3D space profiling. We select 14 consecutive days for about 60 instances of monitoring from Mar. 14 to Mar. 27, 2017, to verify the reliability of our measurement.
We use the two-tailed hypothesis test~\cite{statistics book}: $\mathcal{H}_0: \mu_1 = \mu_2$ $vs.$ $\mathcal{H}_1: \mu_1 \neq \mu_2$, where $\mu_1$ denotes our average measured data for all days and $\mu_2$ is the average for the official ones. The test result, $P=0.9999 \gg 0.05$, indicates that there is no significant difference between the two values, which confirms the reliability of our measurements.

\subsection{Selection of Model Parameters}
According to the previous AQI monitoring results for coarse-grained scenarios~\cite{model3}, AQI is related to wind~(including speed and direction), temperature, humidity, altitude and spatial locations. But for fine-grained scenarios, correlations between AQI and these spatial parameters need to be reconsidered, due to the heterogenous diffusion in both vertical and horizontal directions in a small-scale area.
In this test, all these potential parameters are measured by our ARMS with different sensors.
To evaluate the real correlation between these parameters, we adopt the spatial regression according to~\cite{sptialmodel}, and test the coefficient for each parameter. Mathematically, the spatio-temporal model is given below:
\begin{equation}
C(s_i) = \boldsymbol{z}(s_i) \boldsymbol{\beta}^\mathrm{T} + \varepsilon(s_i),
\end{equation}
where $C(s_i)$ is the particle concentration at position $s_i$, $\boldsymbol{z}(s_i) = (z_1(s_i),...,z_n(s_i))$ denotes the vector of $n$ parameters at $s_i$, and $\boldsymbol{\beta} = (\beta_1,...,\beta_n)$ is the coefficient vector. $\varepsilon(s_i)\sim N(0,\sigma^2)$ is the Gaussian white-noise process.

Based on our data, we use the least square regression and implement a hypothesis test for each coefficient $\beta_j$, as $\mathcal{H}_0: \beta_j = 0$. The results in Table~\ref{Table 1} indicate that wind and location are highly related to AQI distribution, whereas temperature and humidity are not.

\begin{table}[!t]
\caption{Result of the Hypothesis Test}
\centering
\begin{tabular}{|l|c|}
\hline
Tested Parameter & P\_value \\
\hline
Wind & $7.5693 \e{-5}\ (\ll 0.05)$ \\
\hline
Location & $2.0981 \e{-5}\ (\ll 0.05)$ \\
\hline
Temperature & $0.9070\ (\gg 0.05)$ \\
\hline
Humidity & $0.6996\ (\gg 0.05)$ \\
\hline
\end{tabular}
\label{Table 1}
\end{table}

\section{Fine-grained AQI Distribution Model}
In this section, we provide a prediction model considering both physical particle dispersion and NN structure. We first introduce the physical dispersion model for the fine-grained scenario. Then, we provide a brief introduction of NN we adopt in modeling, which can adapt to complicated cases, such as the non-linearity introduced by extreme weather. Finally, we embed the dispersion model in NN to design our distribution model.

\subsection{Physical Particle Dispersion Model}
We first address the physical particle dispersion model for fine-grained scenarios.
Specifically, we ignore the influence of temperature and humidity according to discussions in Section 2.D, and select the Gaussian Plume Model~(GPM) in the particle movement theory~\cite{model1}, to describe the particle's dispersion. GPM is widely used to describe particles' physical motion~\cite{mathmodel,model2}, and its robustness has been proved in a small scale system~\cite{robust}.

GPM is expressed as
\begin{equation}
C(x,y,z) = \frac{Q}{2\pi \sigma_y \sigma_z u} \exp \left(-\frac{(z-H)^2}{2\sigma_z^2}\right) \exp\left(-\frac{y^2}{2\sigma_y^2}\right),
\label{GPM}
\end{equation}
where $Q$ is the point source strength, $u$ is the average wind speed, and $H$ denotes the height of source.

To adopt GPM into the fine-grained scenario, the GPM is revised as below
\begin{align}
C(\vec{x},u) & = \int_{-\frac{L}{2}}^{\frac{L}{2}} \frac{\lambda}{2\pi \sigma_y \sigma_z u} \exp\left(-\frac{(z-H)^2}{2\sigma_z^2}-\frac{y^2}{2\sigma_y^2}\right) \ud y  \nonumber\\
& = \frac{\lambda \exp\left(-\frac{(z-H)^2}{2\sigma_z^2}\right)}{2\pi \sigma_z u} \int_{-\frac{L}{2\sigma_y}}^{\frac{L}{2\sigma_y}}\exp\left(-\frac{\gamma^2}{2}\right) \ud \gamma \nonumber \\
& = \frac{\lambda}{\sqrt{2\pi} \sigma_z u} \exp\left(-\frac{(z-H)^2}{2\sigma_z^2}\right)\left[1-2Q\left(\frac{L}{2\sigma_y}\right)\right],
\label{revised model}
\end{align}
where $C(\vec{x},u)$ is the AQI value at location $\vec{x}$, $u$ is the real wind speed at different locations in the entire space, $H$ denotes a variable that reflects the influence of wind direction, which presents severely polluted areas along $z$-axis.
Pollution mainly derives as a line source aligned the $y$-axis, and $L$ denotes the length of polluted source, $\lambda$ denotes the particle density at the source. $\sigma_y$ and $\sigma_z$ are diffusion parameters in $y$ and $z$ directions, and are both empirically given.
The dispersion model in~(\ref{revised model}) can reflect physical characteristics, but can hardly deal with unpredictable complicated changes, such as the non-linearity introduced by extreme weather.

\begin{figure}[!t]
\small
\centering
\includegraphics[width=3.4in]{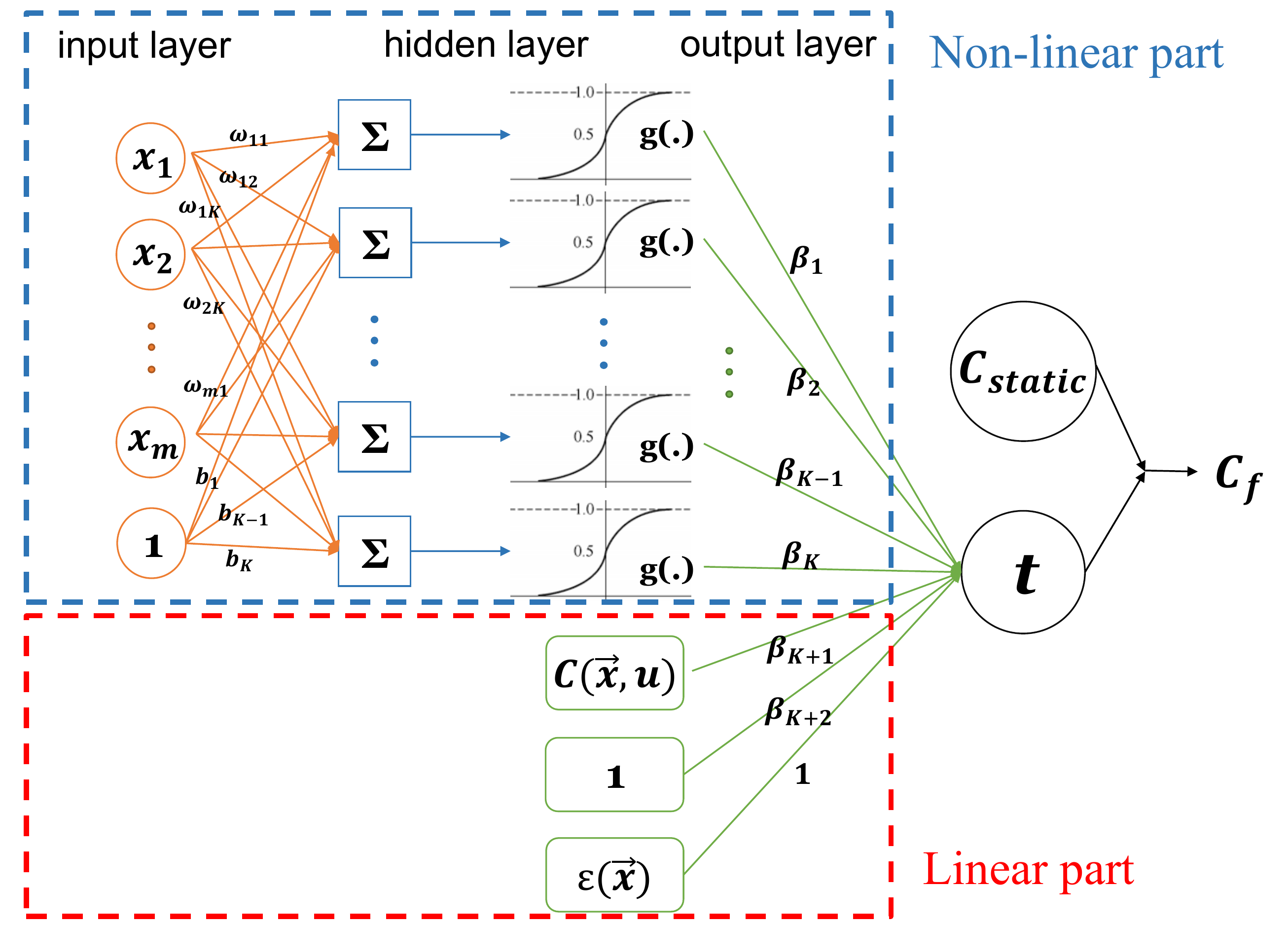}
\caption{The model structure of GPM-NN.}
\vspace{-0.15cm}
\label{figure AGNN model}
\end{figure}

\subsection{Neural Network Model}
The neural network model, especially multilayer perceptron~(MLP), has been wildly adopted to do estimation for air quality~\cite{NN1,NN2,NN3,NN4}. They usually train models by using a huge amount of data to achieve decent performance. All possible influential factors are involved as the neural network input variables for network training.
Other types of NN~\cite{Revision:CNN,Revision:RNN} are proposed for better classification with more complex structures.
As it has been proved that a three-layer neural network can compute any arbitrary function~\cite{nnproof1,nnproof2,nnproof3}, NN is able to present the complicated changes in fine-grained scenario.
However, without considering the physical characteristics of AQI, the NN model may overfit and perform worse on the test data than on the training data~\cite{NN1}.

\subsection{GPM-NN Model}
In order to utilize the advantages of both GPM and NN, we embed the revised GPM in NN, and put forward GPM embedding NN~(GPM-NN) model.

\subsubsection{Model Description}
As shown in Fig.~\ref{figure AGNN model}, the model structure contains a linear part~(the physical dispersion model) and a non-linear part~(the NN structure) for fine-grained AQI distribution, respectively. Let $N$ be the total number of data collected by ARMS, which is represented by a pair $(\bm{X}_j,t_j)$, where $\bm{X}_j=[x_1\ x_2\ \dots \ x_m]^{\rm{T}}$ is the $j^{th}$ sample with a dimensionality of $m$ variables and $t_j$ is the measured AQI value.
\begin{enumerate}[\indent (a)]
\item In the non-linear NN part, let $K$ denote the total number of neurons in the hidden layer. The weights for these neurons are denoted by $\bm{W}=[\bm{W}_1\ \bm{W}_2\ \dots \ \bm{W}_K]$, where $\bm{W}_i=[\omega_{i1}\ \omega_{i2}\ \dots \ \omega_{im}]$ is the $m$-dimensional weight vector containing the weights between the components of input vectors and the $i^{th}$ neuron in the hidden layer. $\bm{b}=[b_1\ b_2\ \dots \ b_K]$ is the bias term of the $i^{th}$ neuron. The non-linear part with $K$ neurons in the hidden layer will have $\bm{\beta}=[\beta_1\ \beta_2\ \dots \ \beta_K]$ as weights for output layer and $g(\cdot)$ is the activation function.
\item In the linear part, we use $C(\vec{x},u)$, a constant value and a Gaussian process as inputs, to reflect the influence of the physical model. The regression weights are correspondingly determined as $\beta_{K+1}$, $\beta_{K+2}$ and 1.
\end{enumerate}

Thus, the mathematical expression of the proposed model can be written as
\begin{equation}
\begin{split}
t(\vec{x},u) = & \sum_{i=1}^{K} \beta_i g(\bm{W}_i \bm{X}_j + b_i) + \beta_{K+1} C(\vec{x},u) + \\
& \beta_{K+2} + \varepsilon(\vec{x}), \qquad \quad j = 1, 2, \dots, N,
\label{final model}
\end{split}
\end{equation}
where $t(\vec{x},u)$ is the estimated value of $t_j$ and it represents the model's output. $C(\vec{x},u)$ is the output of the dispersion model in~(\ref{revised model}) and $\beta_i$ are regression coefficients. $\varepsilon(\vec{x})\sim N(0,\sigma^2)$ is the measurement error defined by a Gaussian white-noise process.
Since there is a risk that the NN part will overfit and perform worse on the test data than training data, the estimated AQI value is expressed as
\begin{equation}
C_f(\vec{x},u) = C_{static} + t(\vec{x},u),
\end{equation}
where $C_{static}$ is the average value of our measured AQI in a day, which is an invariant to quantify basic distribution characteristics.

\subsubsection{Parameter Estimation}
As shown in~(\ref{final model}), GPM-NN has $(K$+$3)$ parameters, $H$, $\beta_1$, $\beta_2$, $\dots$, $\beta_{K+2}$, which need to be estimated based on data collected by ARMS. 50 days' data are used for training the non-linear part of GPM-NN.
We use the least square regression to estimate the parameters.
Let $\mathcal{S}$ denote the residual error as
\begin{equation}
\mathcal{S} = \sum_{i=1}^{N} \left\| \hat{C}_f(\vec{x}_i,u_i)- \beta_{K+2} - \beta_{K+1} C(\vec{x},u) - \sum_{j=1}^{K} \beta_j g_j \right\|^2
\label{residual}
\end{equation}
where $i$ denotes the measuring sample of the $i^{th}$ observation point, and $g_j=g(\bm{W}_j \bm{X}_i + b_j)$.

\begin{proposition}
Equation (\ref{residual}) has a unique minimum point for estimated parameters $\beta_{1}$, $\beta_{2}$, $\dots$, $\beta_{K+2}$ and $H$, when $\sigma_z^2>\max\{2z_i^2,2H_0^2\}$.
\end{proposition}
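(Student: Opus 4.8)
The plan is to treat $\mathcal{S}$ in (\ref{residual}) as a function of the parameter vector $\boldsymbol{\theta}=(\beta_1,\dots,\beta_{K+2},H)$ and to prove uniqueness by showing that $\mathcal{S}$ is strictly convex on the admissible region $\{|H|\le H_0\}$, which, together with coercivity in the $\beta$'s, forces a single global minimizer. The key structural observation is that every parameter except $H$ enters $\mathcal{S}$ linearly: for a fixed $H$, the terms $\sum_j\beta_j g_j$, $\beta_{K+1}C(\vec{x},u)$ and $\beta_{K+2}$ in (\ref{final model}) form an ordinary linear least-squares objective whose normal equations have a unique solution precisely when the associated design matrix has full column rank. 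I would therefore first record the normal equations $\partial\mathcal{S}/\partial\beta_\ell=0$ and verify, using the $N\gg K$ samples collected by ARMS, that the feature columns $(g_{1i},\dots,g_{Ki},C(\vec{x}_i,u_i),1)$ are linearly independent across the data, so that the $\beta$-block of the problem is nondegenerate.

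Next I would compute the full Hessian $\nabla^2\mathcal{S}=2\,\mathbf{J}^{\mathrm{T}}\mathbf{J}+2\sum_i r_i\nabla^2 r_i$, where $\mathbf{J}$ is the Jacobian of the residuals $r_i=\hat{C}_f(\vec{x}_i,u_i)-\beta_{K+2}-\beta_{K+1}C(\vec{x}_i,u_i)-\sum_j\beta_j g_{ji}$. The Gauss--Newton part $\mathbf{J}^{\mathrm{T}}\mathbf{J}$ is automatically positive semidefinite and becomes positive definite once the full-rank condition above is augmented by the Jacobian column corresponding to $H$, so the entire difficulty is confined to the curvature term $\sum_i r_i\nabla^2 r_i$. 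Since $H$ appears only through the Gaussian factor $\exp\!\big(-(z_i-H)^2/2\sigma_z^2\big)$ of $C(\vec{x}_i,u_i)$ in (\ref{revised model}), the only nonzero second derivatives of $r_i$ are in the $H$ direction; a direct differentiation gives $\partial^2 C/\partial H^2$ equal to a positive sample-dependent constant times $\sigma_z^{-2}\big[(z_i-H)^2/\sigma_z^2-1\big]\exp\!\big(-(z_i-H)^2/2\sigma_z^2\big)$, whose sign is indefinite because the Gaussian is concave near its peak but convex in its tails.

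The hard part, and the place where the hypothesis $\sigma_z^2>\max\{2z_i^2,2H_0^2\}$ enters, is to keep this sign-indefinite curvature from overpowering the positive Gauss--Newton contribution. On the admissible set $|H|\le H_0$ one has $(z_i-H)^2\le 2z_i^2+2H_0^2$, and the two strict inequalities in the hypothesis give $2z_i^2<\sigma_z^2$ and $2H_0^2<\sigma_z^2$, whence $(z_i-H)^2/\sigma_z^2<2$ and therefore $|(z_i-H)^2/\sigma_z^2-1|<1$. Consequently the bracket in $\partial^2 C/\partial H^2$ is bounded and $|\partial^2 C/\partial H^2|=O(1/\sigma_z^2)$, so the residual-weighted term $\sum_i r_i\nabla^2 r_i$ has operator norm of order $1/\sigma_z^2$. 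I would finish by comparing this with the smallest eigenvalue of $\mathbf{J}^{\mathrm{T}}\mathbf{J}$: because the hypothesis makes $\sigma_z$ large relative to the spatial coordinates, the curvature term is strictly dominated and $\nabla^2\mathcal{S}\succ0$ throughout the admissible region. Strict convexity then yields the unique minimizer in $\beta_1,\dots,\beta_{K+2}$ and $H$ claimed by the proposition. I expect this dominance estimate, rather than the routine linear-least-squares bookkeeping for the $\beta$'s, to be the main obstacle, since it is precisely where the plume nonlinearity in $H$ must be tamed by the largeness of $\sigma_z$.
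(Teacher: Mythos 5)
Your plan diverges from the paper's proof in a way worth flagging before assessing correctness: the paper never attempts joint convexity of $\mathcal{S}$. It checks only coordinate-wise second partials, showing $\partial^2\mathcal{S}/\partial\beta_j^2>0$ as trivial sums of squares, and handling $H$ by substituting $t_i=\exp\left(-\frac{(z_i-H)^2}{2\sigma_z^2}\right)\in(0,1]$ so that each summand of $\partial^2\mathcal{S}/\partial H^2$ becomes a quadratic $a_it_i^2+b_it_i$; imposing the sign conditions $a_i<0$, $b_i>0$ reduces to $\sigma_z^2>\max_i 2(z_i-H)^2$, hence to the stated $\sigma_z^2>\max\{2z_i^2,2H_0^2\}$ for $H\in[0,H_0]$. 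You are attempting something strictly stronger — positive definiteness of the full Hessian on the slab $\{0\le H\le H_0\}$ — and the strengthening is exactly where your argument breaks. First, a structural error: it is not true that the only nonzero second derivatives of $r_i$ lie in the $H$ direction. Since $H$ enters $r_i$ only through the product $\beta_{K+1}C(\vec{x}_i,u_i)$, the mixed derivative $\partial^2 r_i/\partial\beta_{K+1}\partial H=-\partial C(\vec{x}_i,u_i)/\partial H$ is nonzero, and $\partial^2 r_i/\partial H^2=-\beta_{K+1}\,\partial^2 C(\vec{x}_i,u_i)/\partial H^2$ carries a factor of $\beta_{K+1}$. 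Consequently your claimed $O(1/\sigma_z^2)$ bound on $\bigl\|\sum_i r_i\nabla^2 r_i\bigr\|$ cannot hold uniformly: the hypothesis $\sigma_z^2>\max\{2z_i^2,2H_0^2\}$ constrains only the geometry $(z_i,H_0,\sigma_z)$ and says nothing about $|\beta_{K+1}|$, the residual magnitudes, or $\lambda_{\min}(\bm{J}^{\rm T}\bm{J})$ — the three quantities your dominance comparison actually needs.

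Worse, the dominance step fails in principle, not merely in rigor: $\mathcal{S}$ is not jointly convex on the admissible region under the stated hypothesis. Write $r_i=c_i-\beta_{K+1}C_i(H)$ with the other parameters absorbed into $c_i$; the $(H,H)$ entry of $\nabla^2\mathcal{S}$ is $2\sum_i\left[\beta_{K+1}^2(\partial_H C_i)^2-r_i\,\beta_{K+1}\partial_H^2 C_i\right]$, whose leading term as $|\beta_{K+1}|\to\infty$ is $\beta_{K+1}^2\sum_i\partial_H^2\left(C_i^2\right)$. But $C_i^2\propto\exp\left(-\frac{(z_i-H)^2}{\sigma_z^2}\right)$ is strictly concave in $H$ whenever $(z_i-H)^2<\sigma_z^2/2$, and the hypothesis forces exactly this regime (for $z_i,H\ge0$ one has $(z_i-H)^2\le\max\{z_i^2,H_0^2\}<\sigma_z^2/2$), so the $(H,H)$ entry becomes negative for large $|\beta_{K+1}|$: the very largeness of $\sigma_z$ you invoke to tame the plume nonlinearity places every sample in the concave core of the squared Gaussian and defeats Hessian domination. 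To salvage your route you would need a priori bounds confining $(\bm{\beta},H)$ and the residuals to a region where $2\bm{J}^{\rm T}\bm{J}$ quantitatively dominates — an assumption the proposition does not supply. The paper's coordinate-wise argument sidesteps this because fixing all other parameters freezes $\beta_{K+1}$ inside its $b_i$ coefficient (whose required positivity quietly absorbs the residual and $\beta_{K+1}$ signs), which is why it gets by with a purely geometric condition on $\sigma_z$; if you want an airtight proof along your lines, retreating to the paper's separable analysis, or adding explicit residual and parameter bounds to make your spectral comparison checkable, are the two viable repairs.
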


\begin{proof}
See Appendix A.
\end{proof}
\vspace{0.3cm}

To find the minimum point of the residual error function $\mathcal{S}(H,\beta_1,\dots,\beta_{K+2})$, we use the Newton method~\cite{newton method} to solve the following equations whose analytical solution does not exist, as
\begin{equation}
\left\{
\begin{aligned}
\begin{split}
& \frac{\partial\mathcal{S}}{\partial H} = 0, \\[7.5pt]
& \frac{\partial\mathcal{S}}{\partial \beta_j} = 0,\qquad j = 1, 2, \dots, K+2.
\end{split}
\end{aligned}
\right.
\label{multivar}
\end{equation}

When the estimation value of $H$~(denoted as $H^{*}$) is determined, $C(\vec{x},u)$ is correspondingly determined. Denote
\begin{equation*}
\bm{J} =
\left[
\begin{matrix}
\begin{smallmatrix}
 g(\bm{W}_1 \bm{X}_1 + b_1)  & \cdots & g(\bm{W}_K \bm{X}_1 + b_K)  & C(\vec{x}_1,u_1)  & 1  \\[7pt]
 g(\bm{W}_1 \bm{X}_2 + b_1)  & \cdots & g(\bm{W}_K \bm{X}_2 + b_K)  & C(\vec{x}_2,u_2)  & 1  \\[7pt]
 \vdots & \ddots & \vdots & \vdots & \vdots  \\[7pt]
 g(\bm{W}_1 \bm{X}_N + b_1)  & \cdots & g(\bm{W}_K \bm{X}_N + b_K)  & C(\vec{x}_N,u_N)  & 1  \\
\end{smallmatrix}
\end{matrix}
\right]_{ N \times (K+2) }
\end{equation*}
as the model output matrix, and similarly
\begin{equation*}
\bm{\beta} =
\left[
\begin{matrix}
\begin{smallmatrix}
 \beta_1  \\[5pt]
 \beta_2  \\[5pt]
 \vdots   \\[5pt]
 \beta_K  \\[5pt]
 \beta_{K+1}  \\[5pt]
 \beta_{K+2}  \\[5pt]
\end{smallmatrix}
\end{matrix}
\right]_{ (K+2) \times 1 }
\end{equation*}
is the vector that needs to be estimated. Hence, the estimated value of $N$ samples can be written as
\begin{equation}
\bm{T} = \bm{J} \bm{\beta}.
\end{equation}

Note that $\bm{J}$ is both row-column full rank matrix, which has a corresponding generalized inverse matrix~\cite{inverse matrix}.
As we have proved (\ref{residual}) has a unique minimum point, we then have
\begin{equation}
\begin{split}
\bm{\beta} & = (\bm{J}^{\rm{T}} \bm{J})^{-1} \bm{J}^{\rm{T}} \bm{J} \bm{\beta} \\
& = (\bm{J}^{\rm{T}} \bm{J})^{-1} \bm{J}^{\rm{T}} \bm{T} \\
& = \bm{J}^{\dag} \bm{T} ,
\end{split}
\end{equation}
where $\bm{J}^{\dag} = (\bm{J}^{\rm{T}} \bm{J})^{-1} \bm{J}^{\rm{T}}$ is known as the Moore-Penrose pseudo inverse of $\bm{J}$. This equation is the least squares solution for an over-determined linear system and is proved to have the unique minimum solution~\cite{moore penrose}. Thus, this equation is equal to the multivariate equation in (\ref{multivar}), by which we can find the minimum value point of $\mathcal{S}$.

\subsubsection{Performance Evaluation}
To determine the initial value of the weights $\bm{W}$ and biases $\bm{b}$ for the hidden layer, we use the training data to do preprocessing and acquire the optimal values. Hence, the model can be completely determined for describing the AQI distribution in fine-grained scenarios.

For evaluating the performance of GPM-NN, we use average estimation accuracy~(AEA) as the merit, expressed as
\vspace{0.3cm}
\begin{equation}
\overline{AEA} = \frac{1}{n} \sum_{i=1}^{n} \left( 1 - \frac{| \hat{C}_f(i) - C_f(i) |}{C_f(i)} \right),
\label{aea}
\vspace{0.3cm}
\end{equation}
where $n$ denotes the total locations in the scenario, $\hat{C}_f(i)$ denotes the estimation AQI value in the $i^{th}$ location and $C_f(i)$ denotes the real measured value.
In Section V and Section VI, we compare the accuracy of AQI map constructed by our GPM-NN and other existing models.

\vspace{0.5cm}
\section{Adaptive AQI Monitoring Algorithm}
In this section, we provide the adaptive monitoring algorithm of ARMS.
Intuitively, a larger number of measurement locations introduce a higher accuracy of the AQI map. However, based on the physical characteristic of particle dispersion in GPM-NN, we can build a sufficiently accurate AQI map by regularly measuring only a few locations. This process can effectively save the energy, and thus improve the efficiency of the system.
Specifically, an AQI monitoring is decomposed into two steps---$complete\ monitoring$ and $selective\ monitoring$---for efficiency and accuracy.
We first trigger $complete\ monitoring$ everyday for one time, to establish a baseline distribution.
Then ARMS periodically~(e.g., every one hour) measures only a small set of observation points, which are acquired by analysing the characteristic of the established AQI map. This process, named as $selective\ monitoring$, is based on GPM-NN to update the realtime AQI map. By accumulating current measurements with the previous map, a new AQI map is generated timely.
Every time when selective monitoring is done, ARMS compares the newly-measured results and the most recent measurement. If there is a large discrepancy between them, which indicates that the AQI experiences severe environmental changes, we would again trigger the complete monitoring to rebuild the baseline distribution.
Thus, ARMS can effectively reduce the measurement effort as well as cope with the unpredictable spatio-temporal variations in the AQI values.

\begin{figure*}[!t]
\small
\centering
\includegraphics[width=0.7\textwidth]{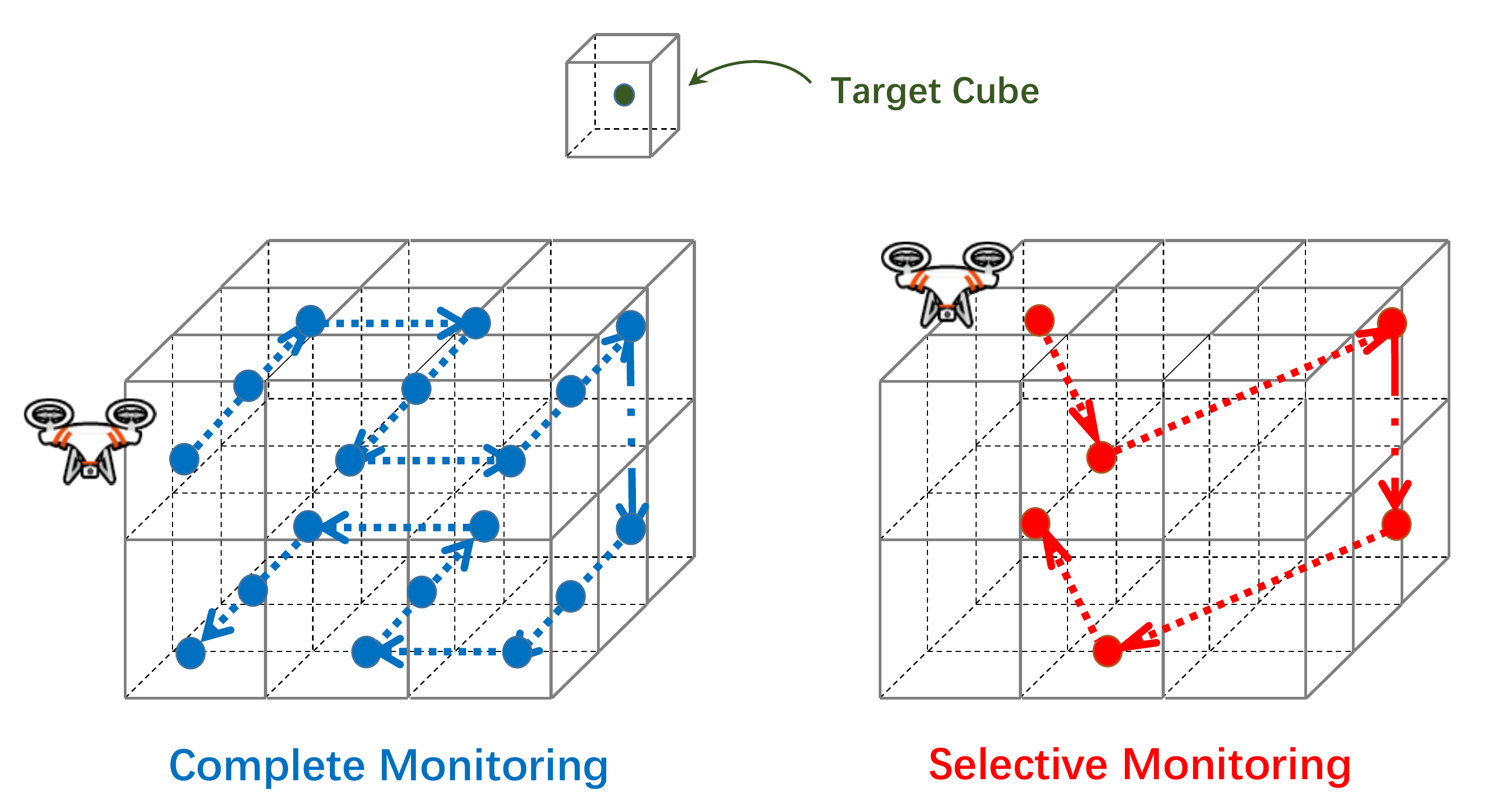}
\caption{An example of the adaptive monitoring algorithm, i.e, complete and selective monitoring.}
\vspace{-0.15cm}
\label{monitoring}
\end{figure*}

\subsection{Complete Monitoring}
The $complete$ $monitoring$ is designed to obtain a baseline characteristic of the AQI distribution in a fine-grained area and is triggered at a day interval.

The entire space can be divided into a set of $5$m$\times$$5$m$\times$$5$m cubes.
In the complete monitoring process, ARMS measures all cubes continuously and builds a baseline AQI map using GPM-NN. The process is of high dissipation, and thus is triggered over a long observation period.

\begin{algorithm}[!t]
\caption{Operation of monitoring algorithm}
/* \textbf{\emph{Complete Monitoring:}} triggered between days */\\
\For{$i=1$ to $sum(Cube)$}
{
    measure the AQI value of $Cube_{i}$ and record\;
    move to the next cube\;
}
generate baseline 3D AQI map $\mathbb{B}$\;
\vspace{0.3cm}
/* \textbf{\emph{Selective Monitoring:}} triggered between hours */\\
\For{$i=1$ to $sum(Cube)$}
{
    calculate $\textit{PDT}_{cubei}$\;
    \If{$\textit{PDT}_{cubei} \geq \textit{PDT} \ \ |\ |\ \ \textit{PDT}_{cubei} \leq \delta$}
    {
        add $Cube_{i}$ to $\mathcal M$\;
    }
}
generate min trajectory $\mathbb{D}$ of $\mathcal M$\;
\ForAll{$p_{i} \in \mathbb{D}$}
{
    measure the AQI value of $Cube_{i}$ and record\;
}
update the realtime AQI map $\mathbb{M}$ based on previous $\mathbb{B}$ and $\mathbb{D}$\;
\If{$\mathbb{M}$ deviates $\mathbb{B}$ by a large $\sigma$}
{
    enter the complete monitoring period\;
}
\end{algorithm}

\subsection{Selective Monitoring}
To reflect changes of the AQI distribution in a small-scale space over time~(e.g., between each hour in a day)~\cite{blueaer}, ARMS uses the $selective$ $monitoring$ to capture such dynamics.
The selective monitoring makes use of previous AQI map, by analyzing the physical characteristics of it, to reduce the monitoring overhead in the next survey and maintain the realtime AQI map accordingly.

In the selective monitoring process, ARMS measures AQI value of only a small set of selected cubes and generates AQI map over the entire fine-grained area.
To deal with the inherent tradeoff between measurement consumption and accuracy, we put forward an important index called the partial derivative threshold~(\textit{PDT}), to guide system selecting specific cubes. $\textit{PDT}$ is defined as
\vspace{0.3cm}
\begin{equation}
PDT_i = \frac{\left| \frac{\partial C_f}{\partial x_i} \right| - \left| \frac{\partial C_f}{\partial x_i}\right|_{min}} {\left|\frac{\partial C_f}{\partial x_i}\right|_{max}-\left|\frac{\partial C_f}{\partial x_i}\right|_{min}},
\vspace{0.2cm}
\end{equation}
where $x_i$ denotes the $i^{th}$ variable in GPM-NN~($i=1,2,\dots,m$), and $C_f = C_f(\vec{x},u)$ denotes the entire distribution in a small-scale area. $|\partial C_f/\partial x_i|_{min}$ and $|\partial C_f/\partial x_i|_{max}$ denote the minimum and the maximum value of the partial derivative for parameter $x_i$, respectively. Note that $\partial C_f/\partial x_i$ describes the upper bound of dynamic change degrees we can tolerate, expressed as
\begin{align}
\frac{\partial C_f}{\partial x_i} = PDT_i\ \cdot & \left(\left|\frac{\partial C_f}{\partial x_i} \right|_{max} - \left|\frac{\partial C_f}{\partial x_i}\right|_{min} \right) + \left|\frac{\partial C_f}{\partial x_i}\right|_{min}, \nonumber \\[5pt]
& \qquad 0 \leq PDT_i \leq 1.
\label{PDT1}
\end{align}

For each parameter, there is one corresponding \textit{PDT}. In general, \textit{PDT} reflects the threshold for dynamic change degrees in a fine-grained area. Area that has large change rate of model's parameters would have a larger \textit{PDT} value, indicating more drastic changes. When given a specific \textit{PDT}, any cube whose $\partial C_f/\partial x_i$ is above threshold of (\ref{PDT1}) will be moved into a set $\mathcal M$. Moreover, when $\textit{PDT}_i$ is too small~(less than a small const $\delta$), the corresponding $i^{th}$ cube will also be added into $\mathcal M$.
Mathematically, set $\mathcal M$ is given as
\vspace{0.15cm}
\begin{equation}
\mathcal M = \{i \mid PDT_i \geq PDT\} \cup \{i \mid PDT_i \leq \delta \}.
\vspace{0.15cm}
\end{equation}

\begin{remark}
Elements in $\mathcal M$ can be the severe changing areas in a small-scale space~(e.g., a tuyere or abnormal building architecture), or typically the lowest or the highest value that can reflect basic features of the distribution. These elements are sufficient to depict the entire AQI map, and hence are needed to be measured between two measurements. Thus, by only measuring cubes in $\mathcal M$, ARMS can generate a realtime AQI map implemented by GPM-NN, while greatly reducing the measurement overhead.
\end{remark}

In general, \textit{PDT} is adjusted manually for different scenarios.
When \textit{PDT} is low, the threshold for abnormal cubes declines, indicating the measuring cubes will increase and the estimation accuracy is relatively high. However, it can cause great battery consumption. On the other hand, as \textit{PDT} is high, the measuring cubes will decrease. This can cause a decline in accuracy, but can highly reduce consumption. In summary, the tradeoff between accuracy and consumption should be studied to acquire a better performance of whole system.

\subsection{Trajectory Optimization}
When target cubes in set $\mathcal M$ are determined, the total network can be modelled as a 3D graph $G=(V,E)$ with a number of $|V|$ target cubes. Hence, finding the minimum trajectory over these cubes is equal to find the shortest hamiltonian cycle in a 3D graph. This problem is known as the traveling salesman problem~(TSP), which is NP-hard~\cite{TSP}.

To solve TSP in this case, we propose a greedy algorithm to find the sub-optimal trajectory.
In the fine-grained scenario, ARMS has power consumption and can monitor no more than $n$ cubes over one measurement. To find the corresponding trajectory, we focus on how to determine the next measuring cube based on current location of ARMS.
Let $\mathbb{Z} = \{O_0,O_1,...,O_{|V|-1}\}$ be the set of coverage cubes, with $O_i$ denotes every observation cube.
The aim is to acquire as many target cubes as possible over the trajectory for higher AQI estimation accuracy.
Considering the significant physical characteristic of $\textit{PDT}$ above, our greedy solution can be formulated as: maximize the next cube's $\textit{PDT}$, as well as minimize the traveling cost from current location to next cube.
Hence, finding the optimal trajectory in this case is equal to an iteration of solving the following optimization problem, expressed as
\vspace{0.15cm}
\begin{equation}
\begin{split}
i^* = & \arg \max \limits_{i} \  \Bigg| \frac{PDT_i}{cost(i)} \Bigg| \\
s.t. \ \ & O_i \in \mathcal M , \\
& O_i  \cap  \bigcup \{ O_{0},O_{1},\dots,O_{i-1} \} = \varnothing , \\
\end{split}
\label{optimization problem}
\end{equation}
where $cost(i)$ is the consumption for the UAV to traverse from the $(i-1)^{th}$ cube to the $i^{th}$ cube, and $PDT_i$ is acquired by analysing the characteristic of latest AQI map.

For every current location $i$, the selection of next target cube follows (\ref{optimization problem}). Note that there are limited target cubes in $\mathcal M$, which are also determined by (\ref{PDT1}), hence the objective function aims to generate trajectory point-by-point.
Thus, using the solution of (\ref{optimization problem}), the greedy algorithm can effectively select key cubes and generate the suboptimal trajectory for ARMS in different scenarios, respectively.

For analyzing the complexity of our algorithm, there are $V$ target cubes in total that need to be added from $\mathcal M$. When current location of ARMS is at the $i^{th}$ cube, it needs to compare another $|V-i|$ edges in $G$ to determine the next measuring cube. Note that every target cube contains $m$ parameters~($m=4$ in our model), and $O(V)=O(n)$.
Thus, the total operation time is $O\left(m\sum_{i=1}^{V-1} |V-i|\right) = O(n^2)$.

Algorithm 1 describes the whole process of the monitoring algorithm. Complete monitoring is triggered between days and selective monitoring is triggered between hours.
When the monitoring area experiences severe environmental changes such as the gale, ARMS compares the result of map built by selective monitoring and the map built last time. If there is a large deviation $\sigma$ between them, ARMS would again trigger the complete monitoring to rebuild the baseline distribution.

\vspace{0.4cm}
\section{Application Scenario I: Performance Analysis in Horizontal Open Space}
In this section, we implement the adaptive monitoring algorithm in a typical 2D scenario, namely the horizontal open space. We present performance analysis of GPM-NN and adaptive monitoring algorithm in this typical scenario, respectively.

\subsection{Scenario Description}
When the 3D space has a limited range in height, ARMS needs to cover target cubes nearly in the same horizontal plane. Two distant cubes at the same height may have a low correlation, as the wind may create different concentration of pollutants in a horizontal plane.
This scenario is commonly considered as a typical 2D scenario and often with a horizontal-open space~(e.g., a roadside park), as shown in Fig.~\ref{system model1}.


\begin{figure}[!t]
\small
\centering
\includegraphics[width=3.4in]{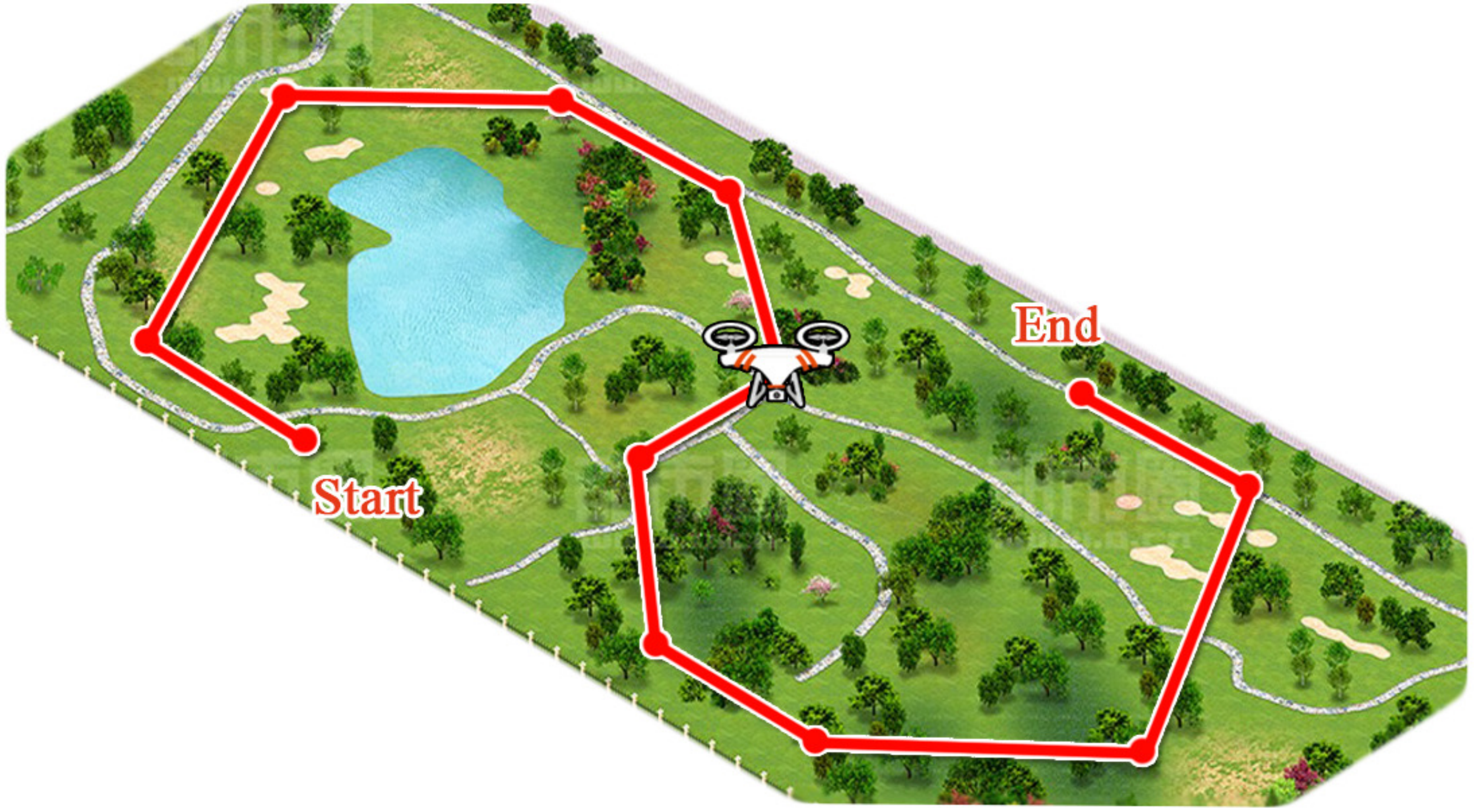}
\caption{The typical application scenarios of ARMS in 2D space~(a roadside park).}
\label{system model1}
\end{figure}

\subsection{Performance Analysis}
In this section, we first compare the accuracy of GPM-NN with other existing models by the experimental result in Fig.~\ref{accuracy 2d}.
Then, Fig.~\ref{layernum 2d} illustrates the influence by different numbers of neurons in the hidden layer.
To study GPM-NN's performance when AQI varies, in Fig.~\ref{different aqi 2d}, we show the relationship between different AQI values and corresponding estimation accuracy.
In Fig.~\ref{HGS}, we present the performance of our monitoring algorithm versus other selection algorithms.
Finally, Fig.~\ref{tradeoff 2d} shows the tradeoff between system battery consumptions and estimation accuracy via different $\textit{PDT}$s.

\subsubsection{Model Accuracy}
In Fig.~\ref{accuracy 2d}, we compare three prediction models, our regression model GPM-NN, linear interpolation~(LI)~\cite{LI} and classical multi-variable linear regression~(MLR)~\cite{sptialmodel}, respectively, versus different values of $\textit{PDT}$.
LI uses interpolation to estimate the AQI value of undetected cubes by other measured cubes, while MLR uses multiple parameters~(e.g., wind, humidity, temperature, etc.) of measured cubes to do regression and estimation.

In the horizontal open space scenario, we can find that GPM-NN achieves the highest accuracy.
In each curve, we can see that the average estimation accuracy decreases as the $\textit{PDT}$ value increases. As discussed in Section IV-B, when $\textit{PDT}$ has a higher threshold, target cubes in set $\mathcal M$ decline, i.e., the total cubes measured by ARMS become fewer. Thus, the estimation accuracy correspondingly drops.
When $\textit{PDT}=0.1$, GPM-NN performs the best among three models, which proves the robust and precision of our model. Moreover, as $\textit{PDT}$ increases~(e.g., $\textit{PDT} = 0.75$), GPM-NN still maintains a high accuracy~(almost $80\%$), while others experience a rapid decrease. This implies that our model is suitable for adaptive energy saving monitoring in a fine-grained area.

\begin{figure}[!t]
\small
\centering
\includegraphics[width=3.3in]{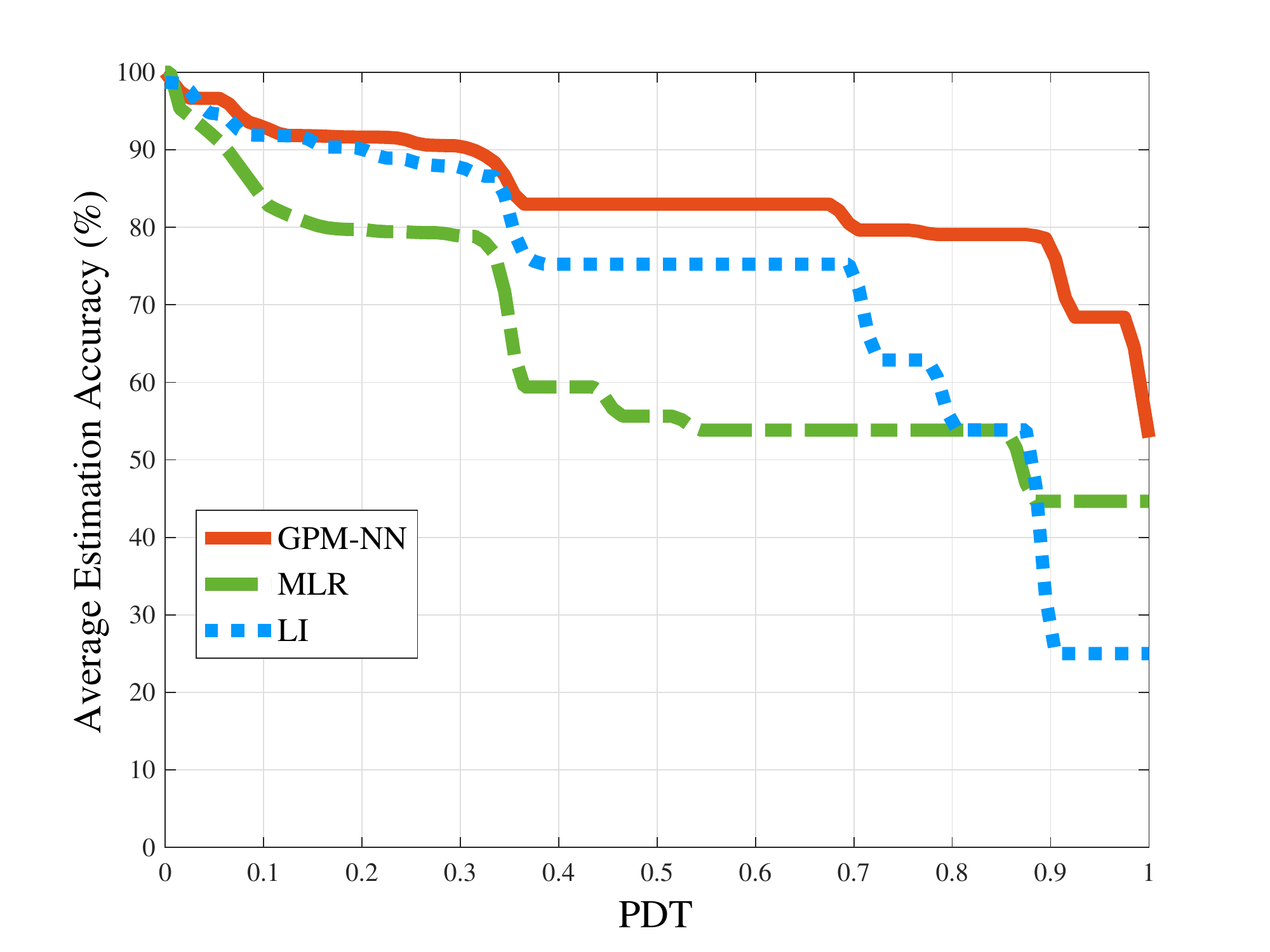}
\caption{The comparison of estimation accuracy between GPM-NN, MLR and LI, in 2D scenario.}
\label{accuracy 2d}
\end{figure}

\subsubsection{Effects of Neuron Numbers}
As we adopt the NN structure to introduce the non-linear part for our GPM-NN model, the number of neurons in the hidden layer can have great impacts on estimation results. In Fig.~\ref{layernum 2d}, we plot the estimation accuracy of different number of neurons in GPM-NN via $\textit{PDT}$, to study their influence.

From Fig.~\ref{layernum 2d}, when $\textit{PDT}<0.1$, the monitoring contains all cubes. When the number of neurons is 0, our model is equal to the physical model in~(\ref{revised model}) with regression, which only contains the linear part. By comparing this curve with others, we can find out that the number of neurons $=$ 0 is worse than the number of neurons $\neq$ 0. By adding the non-linear part~(NN structure), GPM-NN performs better with higher accuracy.
Moreover, the curve with fewer number of neurons~(e.g., the number of neurons $=$ 10) performs worse than with more neurons~(e.g., the number of neurons $=$ 500). In this scenario, we can find that the number of neurons $=$ 1000 can achieve the highest estimation accuracy.
We ignore the situation where the number of neurons $>$ 1000, as too many neurons in the hidden layer can cause overfitting.

\begin{figure}[!t]
\small
\centering
\includegraphics[width=3.3in]{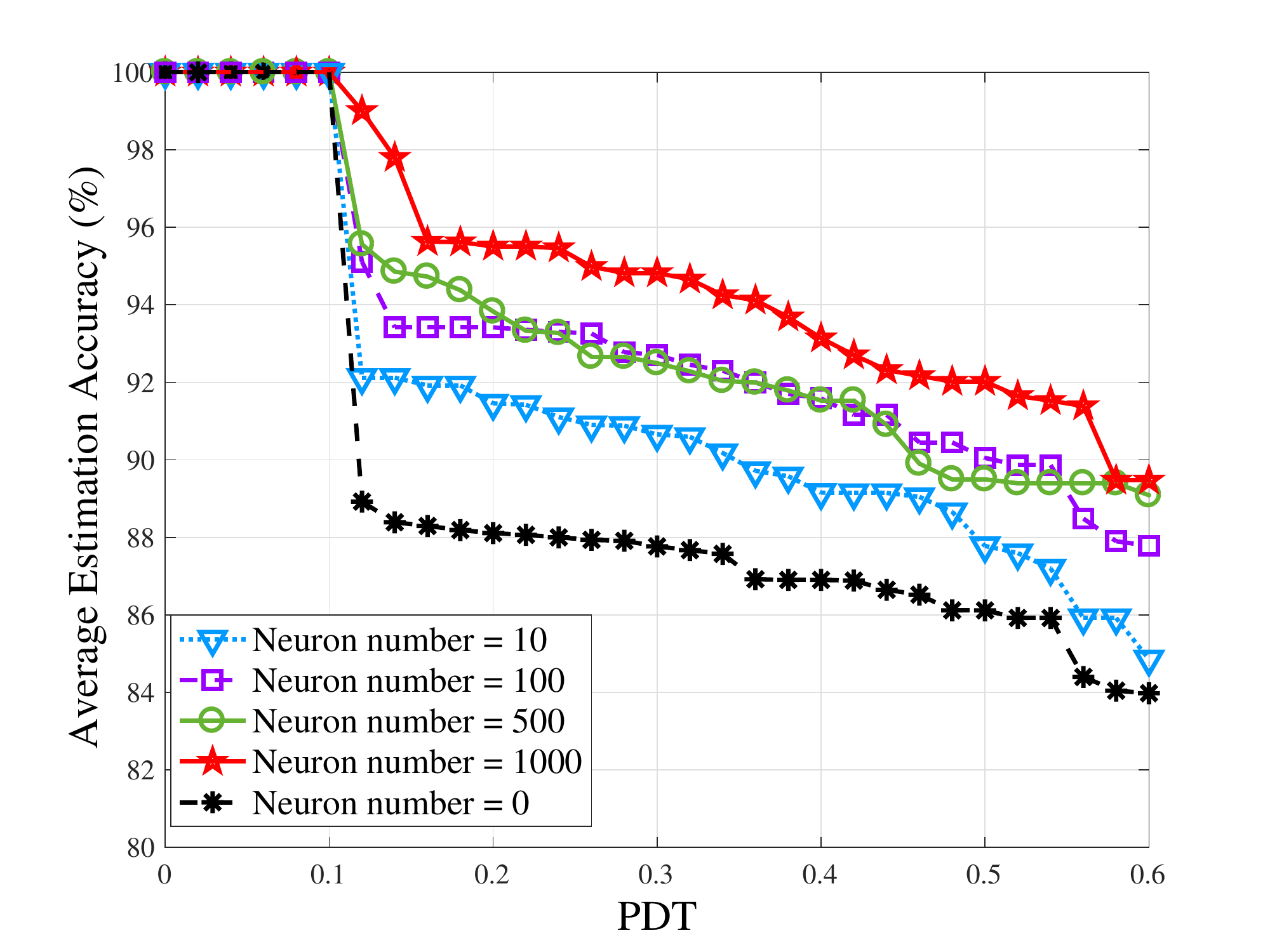}
\caption{The impact of the number of neurons in the non-linear part, in 2D scenario.}
\label{layernum 2d}
\end{figure}

\subsubsection{Effects of Various AQI}
In Fig.~\ref{different aqi 2d}, we plot the estimation accuracy of GPM-NN with different AQI values~(i.e., AQI $\le$ 50, 50 $\le$ AQI $\le$ 200 and AQI $\ge$ 200~\cite{chinaweather}), via different $\textit{PDT}$s.
From the curves, we can find that in 2D scenario, GPM-NN performs the best when AQI $\ge 200$. As $50\le$ AQI $\le 200$, GPM-NN also maintains high accuracy, while relatively worse when AQI is low. This indicates that our model is better predicting in moderately and highly polluted days, which has great instructing significance in forecasting severe pollution as well as prevention. This characteristic is also suitable for the adaptive monitoring algorithm when AQI is high.
Note that even GPM-NN performs not so good when AQI is low, it still outperforms other models.

\begin{figure}[!t]
\small
\centering
\includegraphics[width=3.3in]{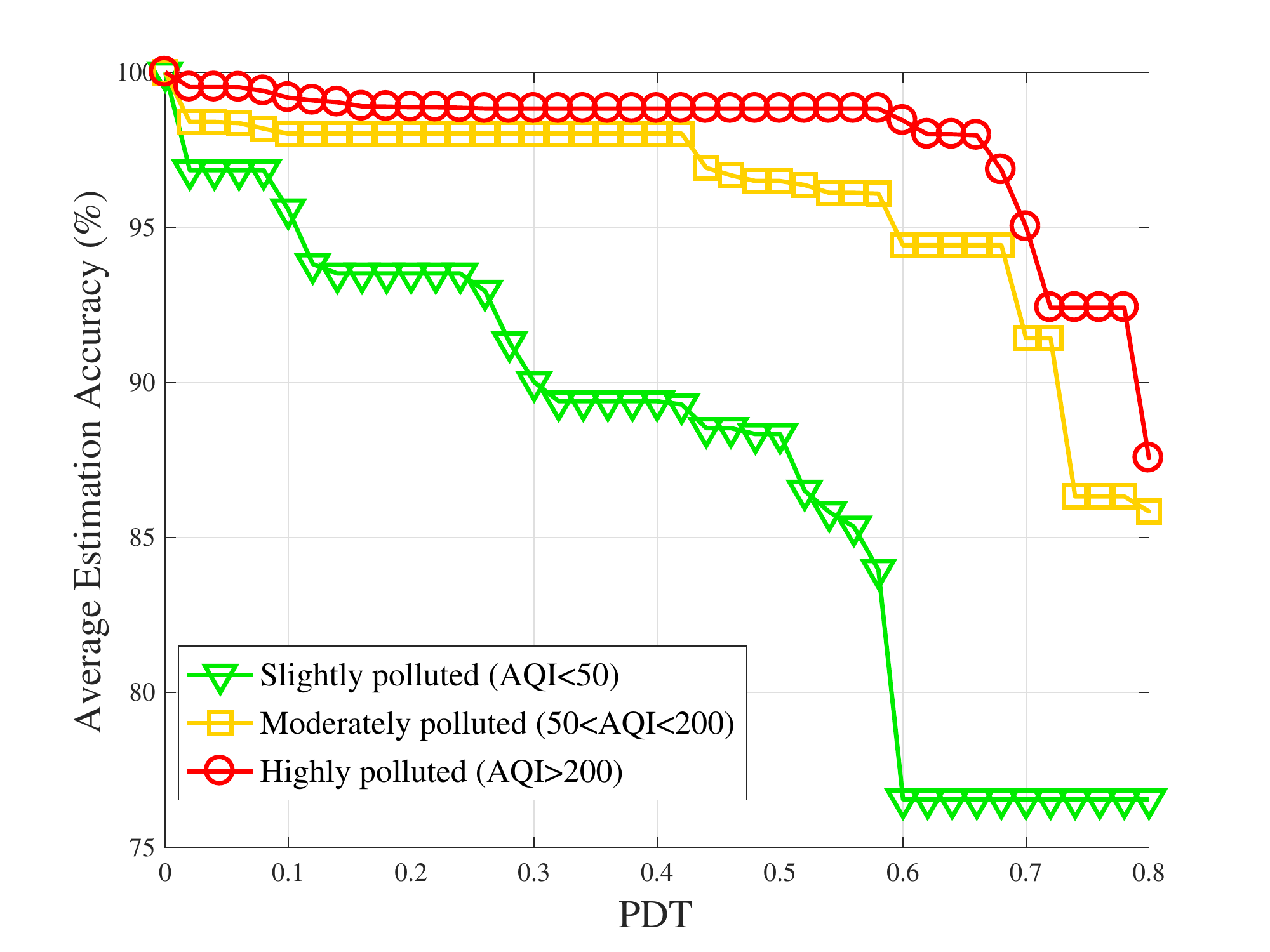}
\caption{The performance of GPM-NN with different AQI value, in 2D scenario.}
\label{different aqi 2d}
\end{figure}

\begin{figure}[!t]
\small
\centering
\includegraphics[width=3.3in]{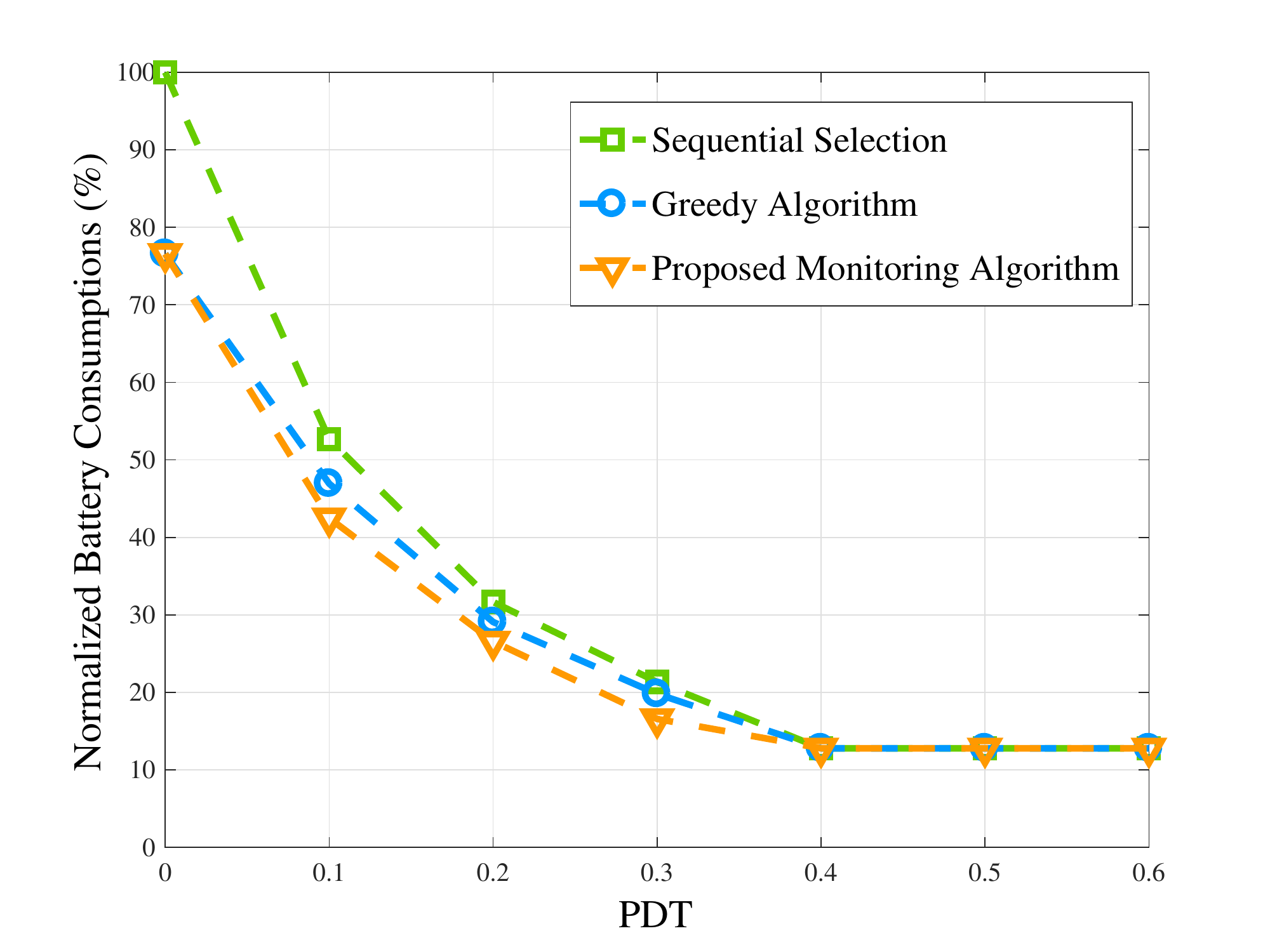}
\caption{Comparison of the Adaptive Monitoring Algorithm, Greedy Algorithm and Sequential Selection.}
\label{HGS}
\end{figure}

\subsubsection{Performance of Adaptive Monitoring Algorithm}
In this part, we compare the results of the proposed monitoring algorithm for trajectory planning, versus other algorithms such as greedy algorithm and sequential selection, by plotting their battery consumptions over one measurement in Fig.~\ref{HGS}. The greedy algorithm aims to select the nearest target cube in $\mathcal M$ to generate the trajectory~\cite{node3}, while sequential selection is done by selecting cubes from the bottom~(or left) to the top~(or right) in order~\cite{blueaer}.

In the typical horizontal open space, we plot the normalized battery consumption achieved by three algorithms in Fig.~\ref{HGS}, via different $\textit{PDT}$s. The normalized consumption is the cost percentage achieved by each monitoring method of one total battery charge~(i.e., 15 minutes). As $\textit{PDT}$ increases, the consumption would correspondingly decrease, as the target cubes in $\mathcal M$ would be fewer. By comparing three curves, we can see that sequential selection is the most consuming method. Our monitoring algorithm performs the best and is better than the normal greedy algorithm, while $0.1\leq PDT\leq 0.4$. After $\textit{PDT}$ reaches 0.4, the consumption of three methods becomes equal, since the target cubes in $\mathcal M$ now is so few that there is no difference in using these algorithm. Hence, the adaptive monitoring algorithm can relatively reduce the power consumption for monitoring AQI in the 2D scenario.

\begin{figure}[!t]
\small
\centering
\includegraphics[width=3.3in]{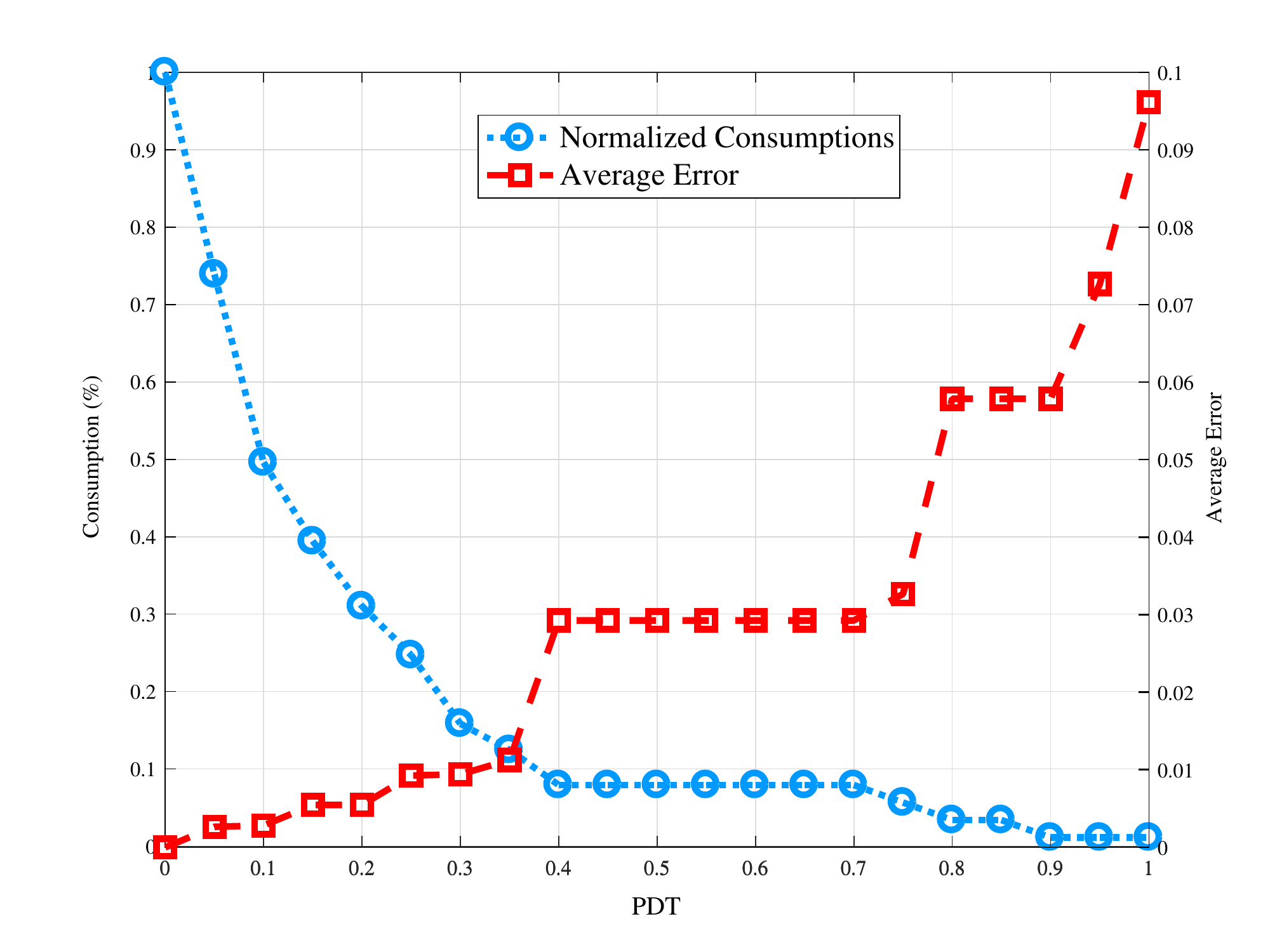}
\caption{The tradeoff between system battery consumption and estimation accuracy, in 2D scenario.}
\label{tradeoff 2d}
\end{figure}

\subsubsection{Tradeoff between Consumption and Accuracy}
In Fig.~\ref{tradeoff 2d}, we illustrate the tradeoff between the battery consumption and estimation accuracy.
To better illustrate the tradeoff, we use $average$ $error$ as a merit, expressed as
\vspace{0.2cm}
\begin{equation}
\overline{ERR} = \frac{1}{n}\ \sum_{i=1}^{n} \bigg( \frac{\hat{C}_f(i)-C_f(i)}{C_f(i)} \bigg)^2,
\vspace{0.2cm}
\end{equation}
where $n$, $\hat{C}_f(i)$ and $C_f(i)$ are the same in (\ref{aea}).
We plot the curves of system's power consumption and average estimation error versus $\textit{PDT}$.

Fig.~\ref{tradeoff 2d} illustrates the relationship between the accuracy and the battery consumption.
Intuitively, a larger $\textit{PDT}$ introduces less power consumption, which proves that with a higher $\textit{PDT}$, consumption declines as the number of measured cubes decreases. Moreover, when $\textit{PDT} \ge 0.4$, the total consumption of the whole system can be reduced by $90\%$. The rapid decline of consumption is also related to the high redundancy of data in the typical 2D space as the roadside park.
On the other hand, the average error of ARMS increases as $\textit{PDT}$ becomes larger, which confirms the existence of the tradeoff between power consumption and estimation accuracy. Under this circumstance, choose $\textit{PDT}=0.41$ can achieve a relatively high predicting accuracy~(over $80\%$) while greatly reduce the battery consumption of the system.

\vspace{0.5cm}
\section{Application Scenario II: Performance Analysis in Vertical Enclosed Space}
In this section, we implement the adaptive monitoring algorithm in a typical 3D scenario, vertical enclosed space. We then present performance analysis of the GPM-NN and the adaptive monitoring algorithm in this typical scenario, respectively.

\begin{figure}[!t]
\small
\centering
\includegraphics[width=3.4in]{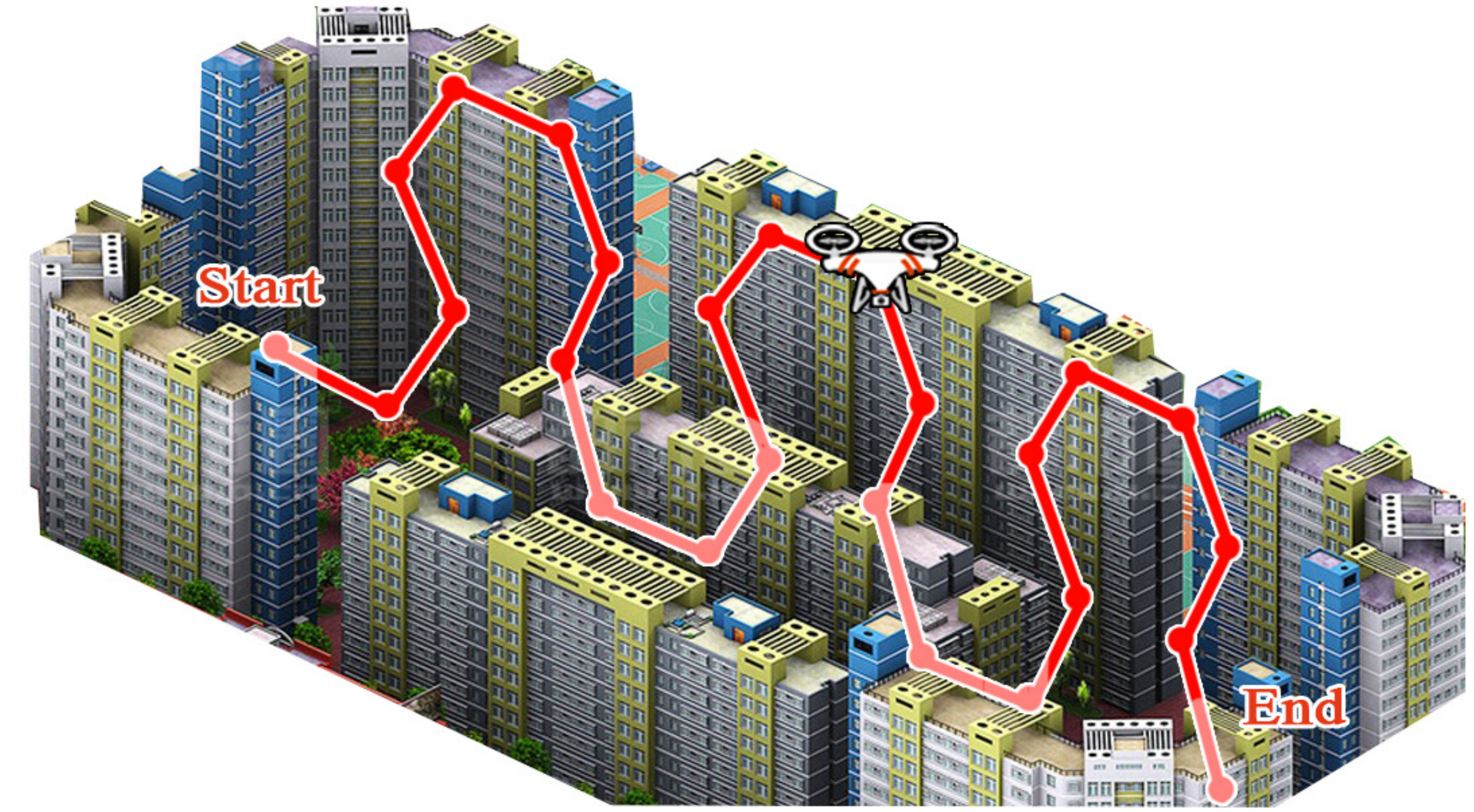}
\caption{The typical application scenarios of ARMS in 3D space~(courtyard inside a high-rise building).}
\label{system model2}
\end{figure}

\subsection{Scenario Description}
In the typical 3D scenario, the 3D space has target cubes in various heights. In this type of scenario, the planar area is relatively limited~(e.g., the courtyard inside a high-rise building).
As shown in Fig.~\ref{system model2}, in such a vertical enclosed space, there is no significant difference on AQI values between two horizontally neighboring cubes, but the wind may create a discrepancy of the pollutant concentration on two cubes at different heights.
Hence, the benefit of selecting more cubes vertically outweigh the cost of traversing between distant cubes at the same heights.

\subsection{Performance Analysis}
In this section, we present performance analysis of ARMS in different aspects, as in Section V.B, for typical 3D scenario.

\subsubsection{Model Accuracy}
In Fig.~\ref{accuracy 3d}, we compare three prediction models.
In the vertical enclosed space scenario, GPM-NN still maintains the highest accuracy among three models via different $\textit{PDT}$s. Compared to 2D scenario, LI decreases rapidly as $\textit{PDT}$ increases, which indicates the heterogenous in 3D AQI distribution. Moreover, when $\textit{PDT}=0.8$, GPM-NN would experience a violent decline. This phenomenon is caused by the inherent characteristic of $\textit{PDT}$. When $\textit{PDT}$ is high, the corresponding number of target cubes in $\mathcal M$ becomes so few that the predicting accuracy can significantly drop, even if only one point unmeasured~(e.g., 10 cubes with $\textit{PDT}=0.75$ and 9 cubes with $\textit{PDT}=0.8$).
This result can provide the basis for choosing the suitable $\textit{PDT}$ value.

In conclusion, GPM-NN performs better in both 2D and 3D fine-grained scenarios, with high estimation accuracy even if measuring cubes are few.

\begin{figure}[!t]
\small
\centering
\includegraphics[width=3.3in]{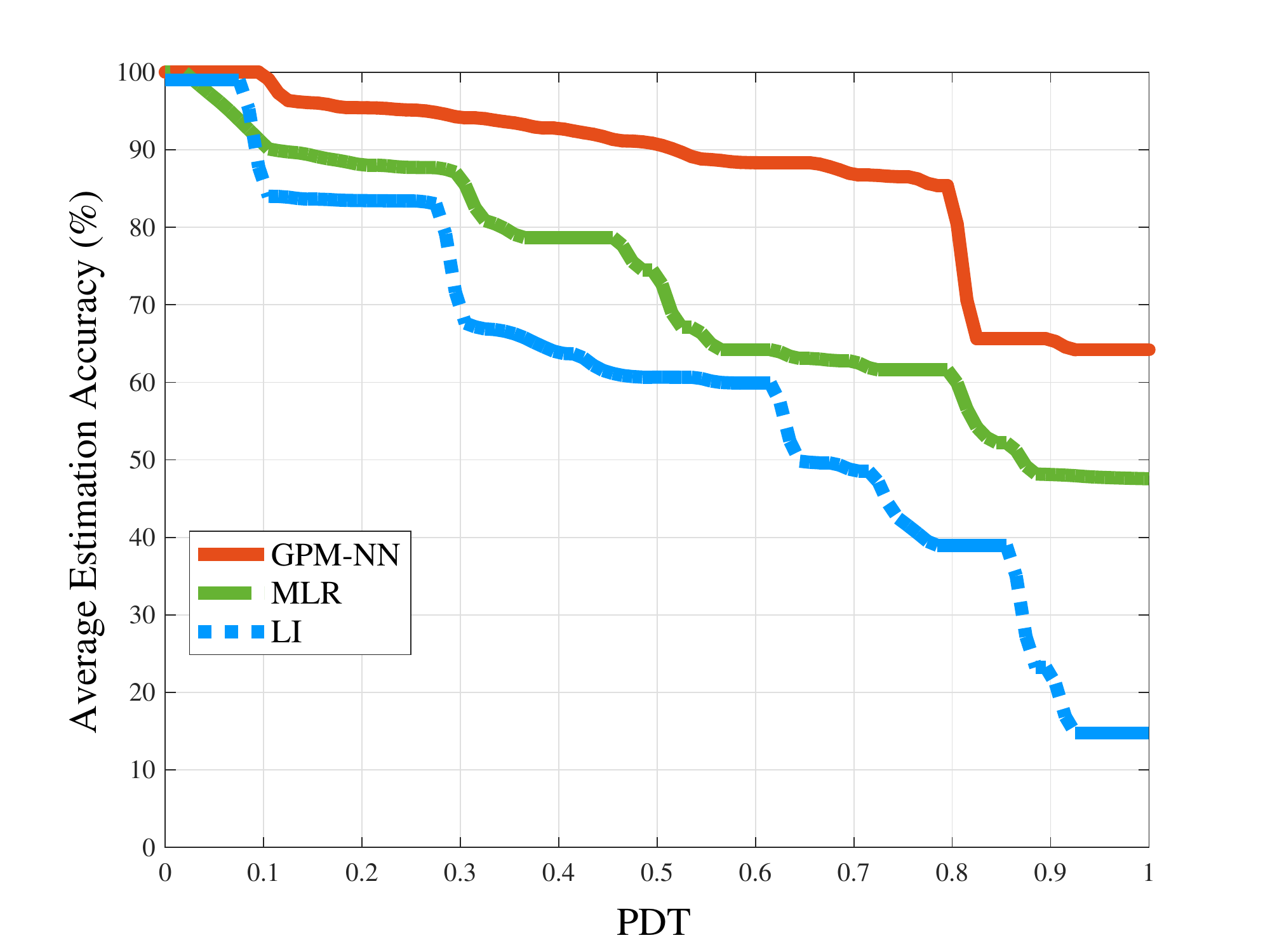}
\caption{The comparison of estimation accuracy between GPM-NN, MLR and LI, in 3D scenario.}
\label{accuracy 3d}
\end{figure}

\subsubsection{Effects of Neuron Numbers}
In Fig.~\ref{layernum 3d}, we study the effects of the number of neurons in a typical 3D scenario. When $\textit{PDT}<0.1$, the result is the same as in the 2D scenario, that each curve performs the best. As $\textit{PDT}$ increases, the curve with the number of neurons $=$ 0 declines most rapidly like that in Fig.~\ref{layernum 2d}. Also, the curve with fewer number of neurons~(e.g., the number of neurons $=$ 10) performs worse than with more neurons~(e.g., the number of neurons $=$ 100/1000) as well. In this scenario, we can find that the number of neurons $=$ 500 can achieve the highest estimation accuracy, which is different from the result in the 2D scenario. 

\begin{figure}[!t]
\small
\centering
\includegraphics[width=3.3in]{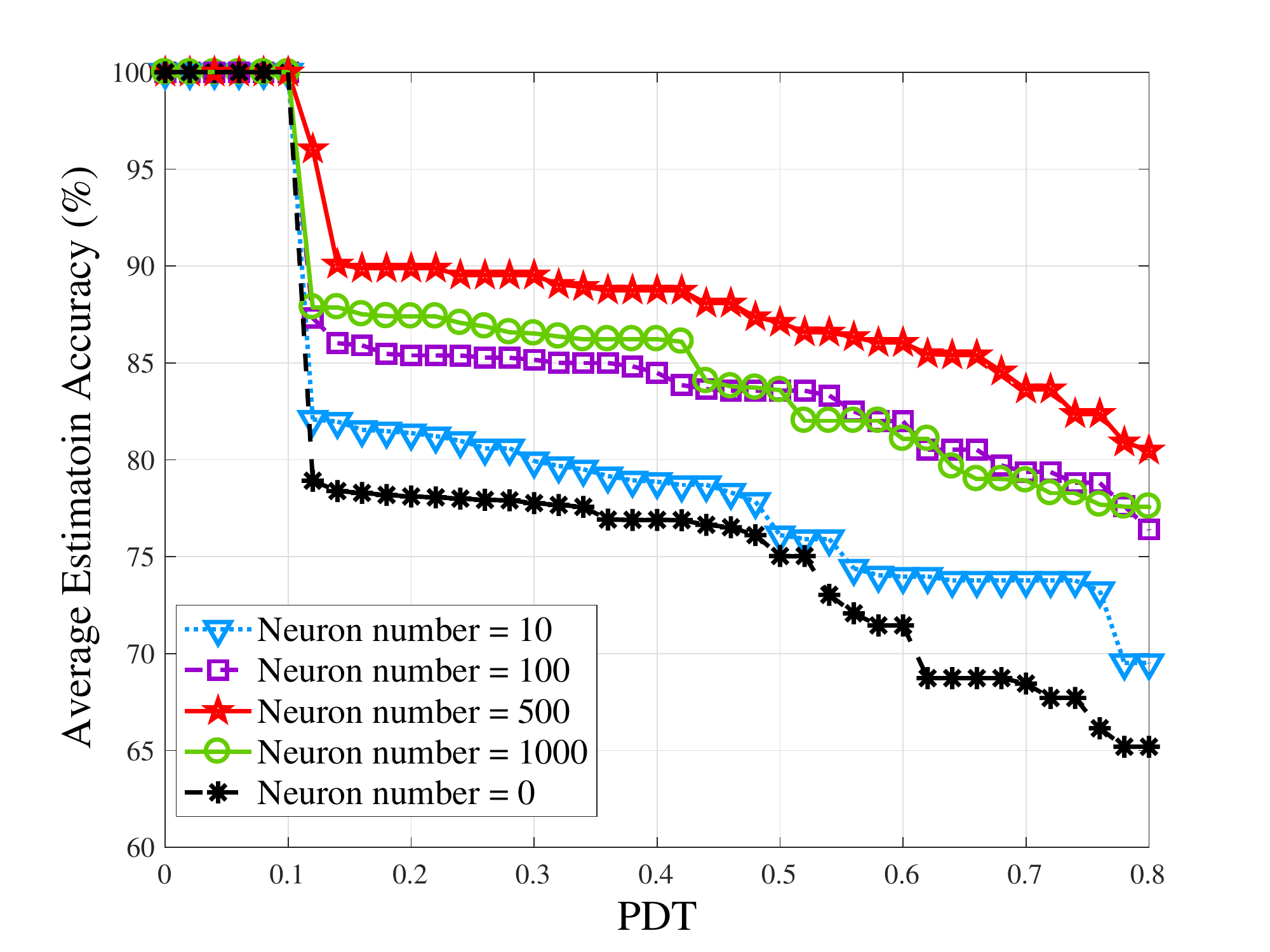}
\caption{The impact of the number of neurons in the non-linear part, in 3D scenario.}
\label{layernum 3d}
\end{figure}

In conclusion, our GPM-NN model~(with combination of linear and non-linear part) is robust and better than that with only linear part. Moreover, the number of neurons in the hidden layer can effectively influence the model's performance, and the optimal value is different in various scenarios.

\begin{figure}[!t]
\small
\centering
\includegraphics[width=3.3in]{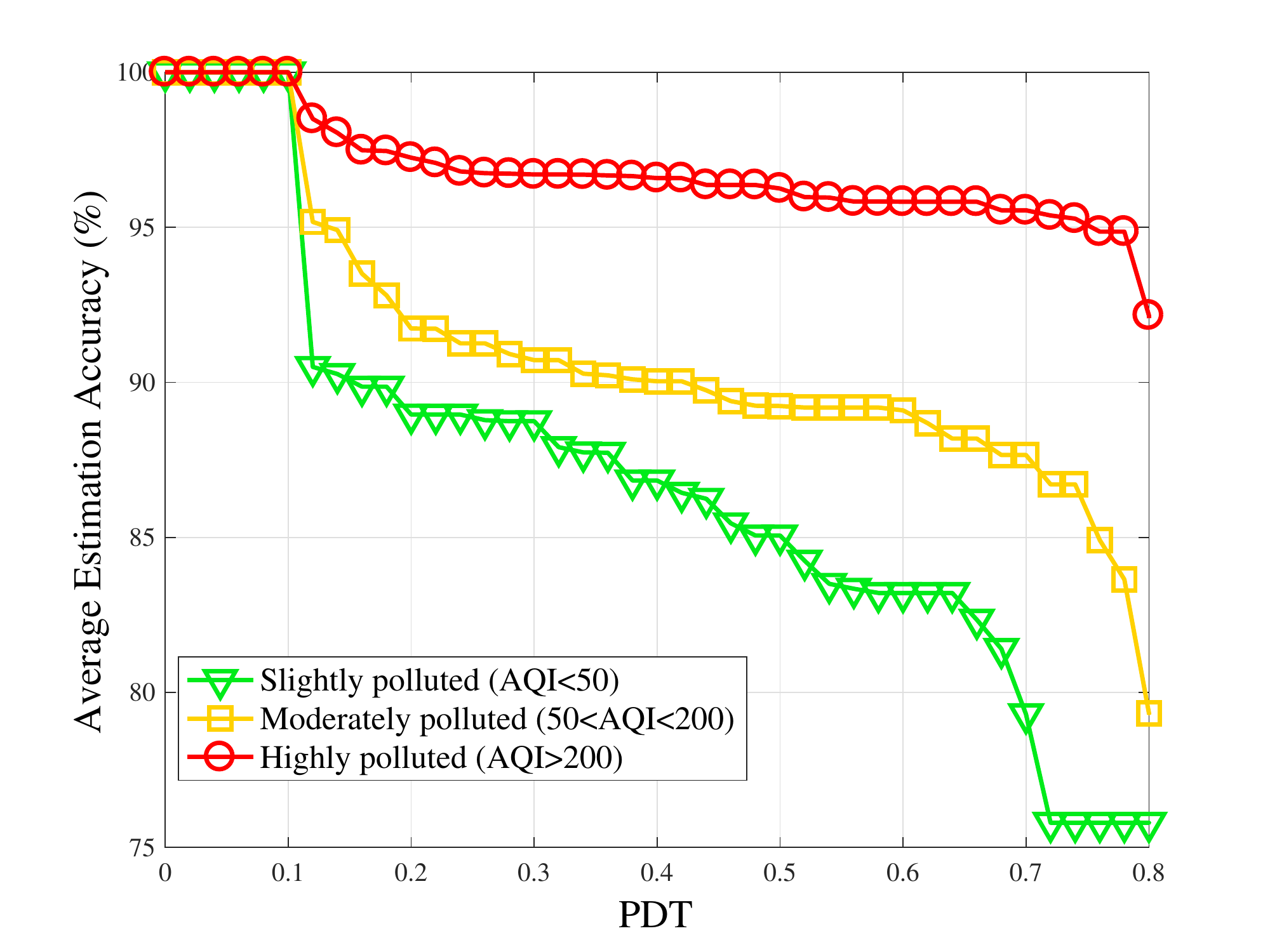}
\caption{The performance of GPM-NN with different AQI value, in 3D scenario.}
\label{different aqi 3d}
\end{figure}

\begin{figure}[!t]
\small
\centering
\includegraphics[width=3.3in]{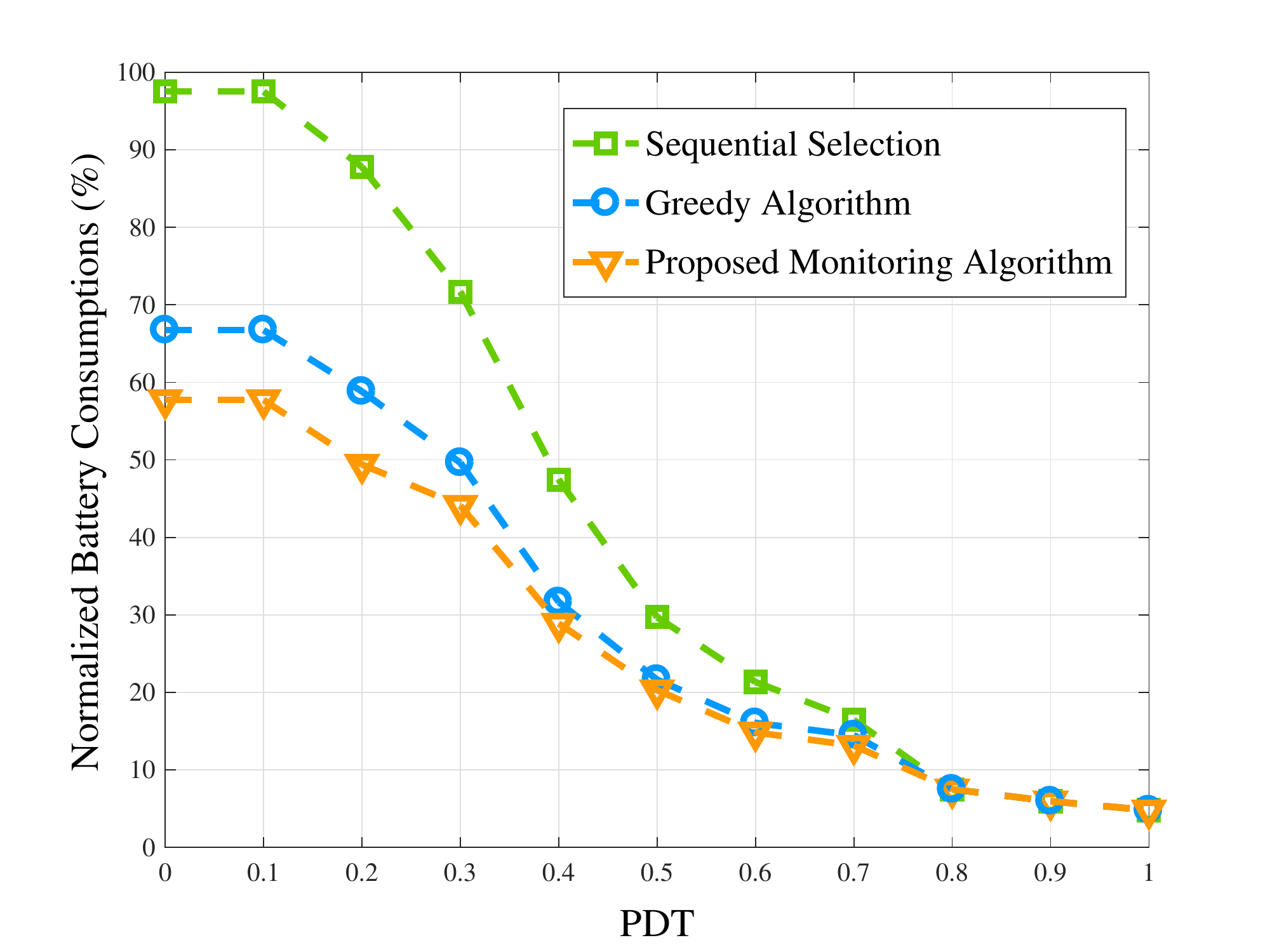}
\caption{Comparison of the Adaptive Monitoring Algorithm, Greedy Algorithm and Sequential Selection.}
\label{VGS}
\end{figure}

\subsubsection{Effects of Various AQI}
In Fig.~\ref{different aqi 3d}, we again plot the estimation accuracy of GPM-NN with different AQI values in the 3D sceanrio.
From the curves, we can find that GPM-NN also performs the best when moderately and highly polluted, while relatively worse when AQI is low.

In conclusion, GPM-NN can maintain better estimation accuracy when the AQI value is moderate and high, which is suitable for the operation of our ARMS.

\subsubsection{Performance of Adaptive Monitoring Algorithm}
In the 3D scenario as vertical enclosed space, Fig.~\ref{VGS} shows the consumption of three algorithms, our monitoring algorithm, greedy algorithm and sequential selection, via different $\textit{PDT}$s. From the figure, we can see when $\textit{PDT}$ is low, sequential selection consumes much more than those of our method and greedy algorithm. This indicates that when scenario becomes 3D, the cube selection can be more complicated and a suitable selection method can highly reduce the battery consumption. Moreover, adaptive monitoring algorithm also performs the best among three methods, and it is better than the greedy algorithm when $PDT\leq 0.8$. As $\textit{PDT}$ becomes high, the normalized consumption of three algorithms is closer, and becomes equal when $PDT\geq 0.8$. Thus, the adaptive monitoring algorithm can effectively save the battery life for monitoring AQI in 3D scenario.

\subsubsection{Tradeoff between Consumption and Accuracy}
In Fig.~\ref{tradeoff 3d}, we plot the tradeoff in the 3D scenario as horizontal enclosed space. This typical 3D scenario is more common in real measurement, and hence the result is more instructive.
As $\textit{PDT}$ becomes higher, the average error grows rapidly as consumption can drop fairly. Given the average error, for example, when $\overline{ERR}=0.04$~(average estimation accuracy is about $80\%$), the corresponding $\textit{PDT}=0.51$, and thus the power consumption can be reduced to as little as $37\%$. Hence, by choosing suitable $\textit{PDT}$ value for monitoring, the measuring efforts can greatly scale down.

\begin{figure}[!t]
\small
\centering
\includegraphics[width=3.3in]{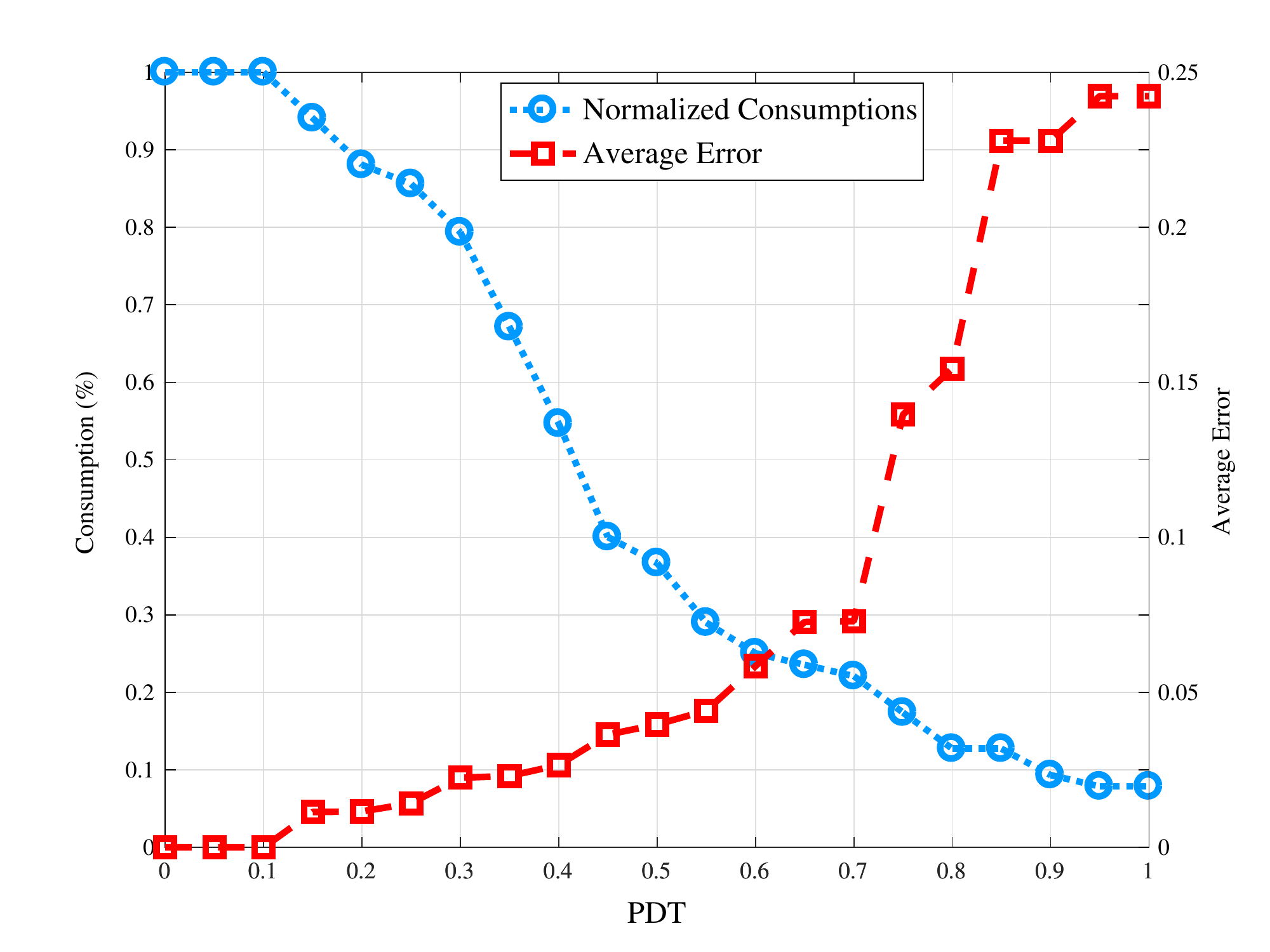}
\caption{The tradeoff between system battery consumption and estimation accuracy, in 3D scenario.}
\label{tradeoff 3d}
\end{figure}

\section{Conclusion}
In this paper, we have designed a UAV sensing system, ARMS, to construct fine-grained AQI maps.
A novel fine-grained AQI distribution model GPM-NN has been proposed based on NN and physical model, to help generate a realtime AQI map with data collected by ARMS.
To reduce the battery consumptions of ARMS, we have proposed the adaptive monitoring algorithm to efficiently update realtime AQI maps. For the 2D and 3D scenarios, we have applied the adaptive monitoring algorithm, respectively. By using the proposed index $\textit{PDT}$, the system can well balance the intrinsic tradeoff between the estimation accuracy and power consumption.
Experimental results have showed that GPM-NN can achieve a higher accuracy in AQI map construction than other existing models, and the number of neurons in the hidden layer of GPM-NN should also be adjusted in various scenarios to acquire better performance. Moreover, the adaptive monitoring algorithm can generate trajectory while greatly saving the battery life of the UAV, and ARMS can well balance the tradeoff between accuracy of AQI map and battery consumptions.

\section*{Appendix A\\ Proof of Proposition 1}
For $\beta_j$ where $j \in [1,K$+$2]$, we have
\begin{equation}
\frac{\partial^{2}\mathcal{S}}{\partial \beta_j^{2}} =
\left\{
\begin{array}{ll}
2 \sum_{i=1}^{N} g_j^2 > 0, & 1 \leq j \leq K,\\[12pt]
2 \sum_{i=1}^{N} C^2(\vec{x}_i,u_i) > 0, \quad & j=K$+1$,\\[12pt]
2 \sum_{i=1}^{N} 1 = 2N > 0, & j=K$+2$.
\end{array}
\right.
\end{equation}
Hence, $\partial\mathcal{S}/\partial \beta_j$ are all convex functions, with $j \in [1,K$+$2]$.

As for variable $H$, the second order partial derivative can be calculated as
\vspace{0.28cm}
\begin{small}
\begin{equation*}
\begin{split}
\frac{\partial^{2}\mathcal{S}}{\partial H^{2}} & =\\
& -2\sum_{i=1}^{N} \beta^{'} \Bigg[ -\frac{\beta^{'}}{\sigma_z^6u_i^2}(z_i-H)^2 \exp\left(-\frac{(z_i-H)^2}{\sigma_z^2}\right) \ - \\
& \left(C - \frac{\beta^{'} \exp\left(-\frac{(z_i-H)^2}{2\sigma_z^2}\right)}{u_i \sigma_z}\right)\frac{1}{u_i \sigma_z^3} \exp\left(-\frac{(z_i-H)^2}{2\sigma_z^2}\right) \ + \\
& \left(C - \frac{\beta^{'} \exp\left(-\frac{(z_i-H)^2}{2\sigma_z^2}\right)}{u_i \sigma_z}\right) \frac{(z_i-H)^2 \exp\left(-\frac{(z_i-H)^2}{2\sigma_z^2}\right)}{u_i \sigma_z^5} \Bigg],
\vspace{0.3cm}
\end{split}
\end{equation*}
\end{small}
where $C = \left(\hat{C}_f(\vec{x}_i,u_i)- C_{static} -\beta_{K+2} - \sum_{j=1}^{K}\beta_j g_j\right)$, and $\beta^{'}=\frac{\lambda}{\sqrt{2\pi}}\beta_{K+1}\left(1-2Q\left(\frac{L}{2\sigma_y}\right)\right)$. Then we have
\vspace{0.2cm}
\begin{small}
\begin{equation*}
\begin{split}
\frac{\partial^{2}\mathcal{S}}{\partial H^{2}} & = \\
& 2\sum_{i=1}^{N} \Bigg[ \left(\frac{2\beta^{'2}(z_i-H)^2}{\sigma_z^6u_i^2} - \frac{\beta^{'2}}{\sigma_z^4u_i^2}\right) \exp\left(-\frac{(z_i-H)^2}{\sigma_z^2}\right) \ + \\
& C \left(\frac{\beta^{'}}{\sigma_z^3u_i} - \frac{\beta^{'} (z_i-H)^2}{u_i\sigma_z^5}\right) \exp\left(-\frac{(z_i-H)^2}{2\sigma_z^2}\right) \Bigg].
\vspace{0.3cm}
\end{split}
\end{equation*}
\end{small}

Let $t_i = \exp\left(-\frac{(z_i-H)^2}{2\sigma_z^2}\right)$, each item of the summation is equivalent to a quadratic function
$Q_i(t_i) = a_i t_i^2 + b_i t_i$. Note that $t_i \in (0,1]$, and $t_i=0$ is one zero point of $Q_i(t_i)$. To satisfy the proposition that $\partial^{2}\mathcal{S}/\partial H^{2}$ always has positive value, the problem becomes
\vspace{0.3cm}
\begin{equation*}
\left\{
\begin{aligned}
\begin{split}
& a_i = \frac{2\beta^{'2}(z_i-H)^2}{u_i^2 \sigma_z^6}-\frac{\beta^{'2}}{u_i^2 \sigma_z^4}  < 0, \\
& b_i = C \left( \frac{\beta^{'}}{u_i \sigma_z^3}-\frac{\beta^{'}(z_i-H)^2}{u_i \sigma_z^5} \right) > 0,
\end{split}
\end{aligned}
\right.
\quad \  \forall i \in [1,N],
\vspace{0.3cm}
\end{equation*}
which can be simplified as:
\vspace{0.3cm}
\begin{equation}
\left\{
\begin{aligned}
\begin{split}
& \sigma_z^2 > \max \limits_{i} \ 2(z_i-H)^2, \\
& \sigma_z^2 > \max \limits_{i} \ (z_i-H)^2.
\end{split}
\end{aligned}
\right.
\label{condition}
\vspace{0.3cm}
\end{equation}

We define $H \in [0,H_0]$, where $H_0$ is the upper bound for a fine-grained measurement.
Hence, by choosing appropriate diffusion parameter $\sigma_z$ as $\sigma_z^2 > \max\{2z_i^2,2H_0^2\}$, we have
\vspace{0.3cm}
\begin{equation*}
\begin{split}
\frac{\partial^{2}\mathcal{S}}{\partial H^{2}} & = 2\sum_{i=1}^{N} Q_i(t_i) = 2\sum_{i=1}^{N} (a_i t_i^2 + b_i t_i) \\
& = 2\sum_{i=1}^{N} \left( \frac{b_i^2}{4|a_i|} - |a_i|\left(t_i + \frac{b_i}{2a_i}\right)^2 \right) > 0, \ \forall t_i \in (0,1].
\end{split}
\end{equation*}
\vspace{0.25cm}

Therefore, $\partial\mathcal{S}/\partial H$ is also a convex function, which indicates that equation (\ref{residual}) has a minimum as well as a unique value, correspondingly.

\ifCLASSOPTIONcaptionsoff
  \newpage
\fi

\end{document}